\documentclass[a4paper,fleqn]{cas-sc}

\usepackage{amsmath,amsthm,amsfonts,amssymb,bbm,xcolor,mathtools}
\usepackage{etoolbox}
\usepackage{xcolor}
\usepackage{xspace}
\usepackage{etoolbox}
\usepackage{svg}
\usepackage{microtype}

\usepackage{tikz}
\usetikzlibrary{arrows.meta,shapes}
\usetikzlibrary{shapes.geometric, circuits.logic.US, positioning}
\usetikzlibrary{decorations,decorations.pathreplacing}
\usepackage{subfig}
\usepackage{complexity}

\usepackage[authoryear,longnamesfirst]{natbib}

\def\tsc#1{\csdef{#1}{\textsc{\lowercase{#1}}\xspace}}
\tsc{WGM}
\tsc{QE}
\usepackage{hyperref}
\usepackage{cleveref}

\theoremstyle{plain}
\newtheorem{theorem}{Theorem}[section]

\newtheorem{definition} [theorem]{Definition}

\newtheorem{lemma}      [theorem]{Lemma}
\newtheorem{problem}    [theorem]{Problem}
\newtheorem{proposition}[theorem]{Proposition}

\newtheorem{claim}      [theorem]{Claim}

\newenvironment{claimproof}[1]{\par\noindent{\textit{Proof of claim.}}\space#1}{\hfill $\diamond$\vspace{0.3cm}}

\newcommand{\ie}{\emph{i.e.,}\@\xspace}

\newcommand{\majloc}{\texttt{maj}}
\newcommand{\majglob}[1]{A}
\newcommand{\majority}  [1] {\majloc(#1) }

\newcommand{\iteration} [3] {#1^{#2}(#3) }
\newcommand{\nbx}       [2] {#1(#2, x) }
\newcommand{\nbAx}      [2] {#1(#2, A(x)) }
\newcommand{\nbAit}     [3] {#1(#2, A^{#3}(x)) }
\newcommand{\val}       [2] {#1_{#2} }

\newcommand{\pinf}{\lfloor n / 2 \rfloor}
\newcommand{\psup}{\lceil n / 2 \rceil}

\definecolor{myGreen}{RGB}{0,120,0}
\definecolor{myRed}{RGB}{200,0,0}

\begin{document}

\shorttitle{Example of MBAN solving DCT}    

\shortauthors{}  

\title [mode = title]{Non-trivial automata networks do exist that solve the global majority problem with the local majority rule}

\author[1]{Pedro Paulo Balbi}
\ead{pedrob@mackenzie.br}

\author[2]{K\'{e}vin Perrot}
\ead{kevin.perrot@lis-lab.fr}

\author[2]{Marius Rolland}
\ead{marius.rolland@lis-lab.fr}
\cormark[1]
\cormark[1]

\author[1]{Eurico Ruivo}
\ead{eurico.ruivo@mackenzie.br}

\affiliation[1]{organization={Universidade Presbiteriana Mackenzie, Faculdade de Computa\c{c}\~{a}o e Inform\'{a}tica},
		addressline={Rua da Consola\c{c}\~{a}o 896}, 
		city={Consola\c{c}\~{a}o, S\~{a}o Paulo},
		postcode={SP 01302-907}, 
		country={Brazil}}

\affiliation[2]{organization={Aix-Marseille Universit{\'e}, CNRS, LIS, UMR 7020}, addressline={Av. de Luminy 163}, city={Marseille}, postcode={13009}, country={France}}

\begin{abstract}
The global majority problem, often referred to as the Density Classification Task, is a classical benchmark in the context of probing the computational capabilities of automata networks. It poses the simple yet challenging problem of determining, by totally local means, whether an arbitrary initial configuration of binary states can evolve to a final, homogeneous global configuration that reflects the initial global majority. Although it is known that in the specific case of cellular automata with periodic boundaries no rule is able to solve the problem, in other formulations solutions are known and, in others, the problem is still open. Aligned with the latter, here we explore the possibility of solving the problem with automata networks, operating only with the local majority rule, with a focus on identifying non-trivial cases where it can be solved and explaining why they do so.
\end{abstract}

\begin{keywords}
	\sep density classification task  \sep distributed consensus \sep decision problem
\end{keywords}

\maketitle

\section{Introduction: Density Classification Task, a challenge in computing by distributed consensus}

The \textit{global majority problem}, often referred to as the \textit{Density Classification Task}, DCT for short, stands as a seminal benchmark in the study traditionally in the context of cellular automata (CAs), but naturally relevant and applicable to automata networks in general, and more broadly, complex systems and distributed computation. It poses a seemingly simple yet challenging problem: given an initial configuration of binary states (0s and 1s), can a system of locally interacting components evolve to a final, homogeneous global configuration that reflects the initial global majority? The enduring difficulty and theoretical implications of the DCT have made it a key component for understanding the limits and capabilities of decentralised information processing.

At its core, the DCT demands that an automata network determines, and then propagate, the global majority state. In its original formulation, a binary, one-dimensional CA on an odd-sized lattice with periodic boundary conditions has to converge to a final configuration of all cells being in the 1-state (a homogeneous fixed-point) if the initial configuration has more 1s than 0s, and to a configuration of all 0s, otherwise. The challenge is amplified by the fundamental operating principle that each lattice position (or the nodes of an arbitrary automata network) updates its state synchronously based only on its own current state and the states of its immediate neighbours, according to a fixed local rule. There is no central controller, no global memory, and no direct communication between distant cells, which implies a distributed consensus is being achieved within the system.

As a decision problem solved by distributed consensus in automata networks, the study of DCT has continuously unveiled a deep and rich conceptual web related to discrete dynamical systems, to a point that it has become the most fruitful benchmark problem in the area (although others also exist, as discussed in \cite{portifolio2018}). 

It is already known for many years that no single binary CA rule can solve the task (\cite{Land1995}, or \cite{negativeAsynchDCT2024} for a much simpler proof) and evidences exist that a solution may not exist even employing deterministic block sequential asynchronous updates (\cite{negativeAsynchDCT2024}). These no-go results firmly established the problem's theoretical intractability under standard CA assumptions. 

Nevertheless, the related literature demonstrates that, depending on how DCT is formulated -- such as the type of boundary condition; introduction of additional states; possibility of rules changing over time; different rules at distinct nodes of the network; other connection patterns among the nodes; addition of memory to the state transitions; and asynchronous kinds of updates -- a solution can be shown to exist or not, or at least contribute evidences towards each possibility. Many of these cases are discussed in \cite{deOliveira2014}, still the most comprehensive survey on the problem currently available. 

In tune with the latter efforts, here we explore the ability of the \textit{local majority rule} to solve DCT with distinct connection patterns among the nodes of an automata network. Such a rule simply establishes that the next state of a node is given by the current most frequent state in its neighbourhood, namely, the nodes from which it receives a connection on a directed connected pattern (including the possibility of a self-loop); in case of a draw, the node's next state remains unchanged. More precisely, what we show here is that, beyond the possibility of solving DCT with trivial cases of network connection patterns (such as the fully connected pattern, by which every node receives a connection from every other node), solutions based upon more elaborate connection patterns do exist, where only partial visibility of the nodes by the others is allowed; our focus is on four of these non-trivial cases, although others might exist. 

The reason we have chosen the local majority rule came in part from its nice association to DCT, as it stands for the local version of the global problem we are tackling. Also significant is the fact that the rule is quite pervasive in the literature related to distributed consensus, not only in the context of automata networks (as in \cite{goles1990_neural_automata, PSpaceMajRules_2016, gaertner2017_biased_majority, Abilhoa_Balbi_2020, goles2023_majority_networks}), but also in physics, social sciences and biology (\cite{krapivsky2003_dynamics, galam2008_sociophysics, marshall2019_quorums}). Finally, the known fact that the local majority rule performs badly on DCT in cellular automata would allow us to really probe the flexibility that general automata networks might bring, which was effectively what came about, given the solution families we found.

In the next section we give all the basic definitions to be used in the rest of the paper, as well as the related notation we rely on. The subsequent section is the core of the paper, where the classes of automata networks we found, for which a DCT solution does exist, are presented together with the formal argument about why they work; the presentation is given in five subsections, the first one discussing the trivial cases of network connection patterns, and the other four the non-trivial cases. In the last section of concluding remarks, we summarise our findings and consider alleys for possible follow-up research.
	
	\section{Definitions}

Let $[n]=\{0,\ldots,n-1\}$, and $V = [n]$ be a set of $n$ nodes of a graph, which for present purposes is regarded as a network of $n$ \emph{automata}. 
A \emph{configuration} assigns a binary state among $\{0,1\}$ to each automaton,
\emph{i.e.}, it is an element $x\in\{0,1\}^{|V|}$ or equivalently $x\in\{0,1\}^n$
such that $\val{x}{v}$ is the state of automaton $v\in V$.
We extend this notation to any subset $W\subseteq V$,
with $\val{x}{W}\in\{0,1\}^{|W|}$ the restriction of $x$ on domain $W$.

\begin{definition}
  A Boolean automata network (BAN) of size $n$ is a tuple of $n$ \emph{local functions}
  $f_v:\{0,1\}^n\to\{0,1\}$, one for each automaton $v\in V$,
  inducing a deterministic dynamics $f:\{0,1\}^n\to\{0,1\}^n$ on state space $\{0,1\}^n$,
  simply defined as
  $f(x)=(f_0(x),\dots,f_{n-1}(x))$ for all configurations $x\in\{0,1\}^n$.
\end{definition}

We assume that the automata are updated fully synchronously (or, in parallel) at each step.
A central object in automata network theory is its \emph{interaction graph},
encoding the effective dependencies among the automata.
It is the digraph $G_f=(V,E)$ on vertex set $V=[n]$
and with arc $(u,v)\in E$, if and only if $u$ is an essential variable of $f_v$, i.e., there exists $x\in\{0,1\}$ such that $f_v(x)\neq f_v(x+e_u)$, taken bitwise modulo two, with $e_u$ being the $u$-th base vector.

In the present work, we focus on Boolean automata networks with local functions computing
the majority on a subset of their automata
(on the in-neighbours of the vertex in the interaction graph),
and keeping the current state in case of tie.

\begin{definition}
  The local majority function on $W\subseteq V$ at $v\in V$ is 
  defined as $\majloc_W^v:\{0,1\}^n \to \{0,1\}$ such that:
  \[
    \majloc_W^v(x) = \begin{cases}
      0 &\textnormal{if } \sum_{u\in W} x_u < \frac{|W|}{2}\\
      1 &\textnormal{if } \sum_{u\in W} x_u > \frac{|W|}{2}\\
      x_v &\textnormal{otherwise (tie case).}
    \end{cases}
  \]
\end{definition}

A majority Boolean automata networks will simply be given by its interaction graph,
encoding the architecture of the dependencies (domains of the majority local functions).
For a graph $G=(V,E)$, let $N_G(v)$ denotes the set of in-neighbours of $v\in V$ in $G$.

\begin{definition}
  The majority Boolean automata network (MBAN) $A_G$ on the graph $G=(V,E)$ with $V=[n]$
  is the BAN of size $n$ defined by the local functions
  $f_v = \majloc_{N_G(v)}^v$ for each automaton $v\in V$.
\end{definition}

In the following, our constructions will try to avoid the possibility of tie cases, that is, Boolean networks where the in-degree of all automata is odd.
When there is no ambiguity, we may refer to $A_G$ simply as $A$.

The dynamics of an automata network $A$ induces a functional directed graph (where each node has out-degree exactly one) whose set of nodes is the set of configurations, and there is an arc $(x,y)$ if and only if $A(x) = y$.
A functional directed graph is a set of connected components that are composed of a cycle of periodic configurations, called \emph{limit cycle}, and additional upward trees rooted in the cycle (transient configurations). 
We say that a configuration has \emph{period} $m>0$ when $A^m(u) = u$ and no $t<m$ verifies this ($m$ is also the \emph{length} of the limit cycle).
A configuration of period $1$ is a \emph{fixed point}.

We are interested in the characterization of MBANs whose automata collectively
interact through local majority functions, and whose dynamics globally converges towards the fixed point made up by 
the majority state, from any starting configuration.
This problem is well-known in the literature, and usually cited as the density classification task (DCT), in a reference to the relation between the amounts of 1s and 0s in the initial configuration.

\begin{definition}
  A BAN $f:\{0,1\}^n\to\{0,1\}^n$ solves the \emph{density classication task} (DCT), when:
  \begin{itemize}[nosep]
    \item any configuration of $\{0,1\}^n$ with a majority of $0$s converges to the fixed point $0^n$, and
    \item any configuration of $\{0,1\}^n$ with a majority of $1$s converges to the fixed point $1^n$.
  \end{itemize}
Since the problem is well defined only for odd-sized networks, i. e., when $n$ is odd, so we restrict our considerations to this case.
\end{definition}

Observe that, since the dynamics is deterministic and the property is asked to apply to any configuration,
in order to solve the DCT the majority state should never change throughout the evolution.

For odd $n$, we simply refer to $\majloc$ for $\majloc_V:\{0,1\}^n\to\{0,1\}$.
For any subset configuration $x\in\{0,1\}^n$ and $S \subseteq V$, we denote $\nbx{T}{S} = |\{v \in S \mid \val{x}{v} = 1\}|$ and $\nbx{F}{S} = |\{v \in S \mid \val{x}{v} = 0\}|$,
and use the shortcuts $\nbx{T}{u} = \nbx{T}{N_G(u)}$ and $\nbx{F}{u} = \nbx{F}{N_G(u)}$
for the number of in-neighbors of $u$ in state $1$ and $0$, respectively.



As an example, the complete directed graph $K_n$ with $n^2$ arcs solves the majority classification task:
it converges to the correct fixed point after one step.
On the contrary, a directed cycle $C_n$ with $n$ arcs has the correct two fixed points,
but it fails to solve the majority classification task because no other configuration converges
(it cycles around the graph, because the majority local function of arity one is simply the identity).

	\section{MBANs able to solve DCT}

	\subsection{Introductory cases}

	Due to the previous section, it is hard to decide whether an MBAN can solve DCT. 
	For this reason, we propose some classes of MBANs able to solve DCT, and checkable in polynomial time (over the number of automata). 
    
	The first obvious class is the MBAN defined over the complete directed graph, that is, the graph $G = (V,E)$ such that $(u,v) \in E$ for all $u,v \in V$. 
	Indeed, since the update of all nodes is the majority of all nodes in the graph, this clearly entails that, for all configurations $x$, $A(x)$ becomes the majority of $x$. 
	
	The second class is the MBAN which are built from a digraph $G = (V,E)$ with $n$ nodes; by adding $n+1$ new nodes such that each new node has all the $2n+1$ nodes for in-neighborhood and out-neighborhood. 
    See Figure~\ref{fig:generator} for an example.
	We refer to this class as the \emph{generated MBAN}.
	Note that we can construct the graph $G$ of a generated MBAN from any other graph $G'$.
	In other words, there exists an injective function $f : G_n \to G_{2n + 1}$ such that $f(x)$ describes a generated MBAN, where $G_n$ is the set of directed graphs with $n$ nodes.
   Indeed, if we consider $G = (V, E)$ a graph, we define $S = \{s_0, \ldots, s_{|V|}\}$ a set of $|V| + 1$ new nodes, create $G' = (V', E')$ such that $V' = V \cup S$, and $E'$ is built as follows:
	\begin{enumerate}[nosep]
		\item $(u,v) \in E$ imply that $(u,v) \in E'$,
		\item $(u,s)$ for all $u \in V$ and $s \in S$,
		\item $(s,v)$ for all $s \in S$ and $v \in V$.  
	\end{enumerate}
	
	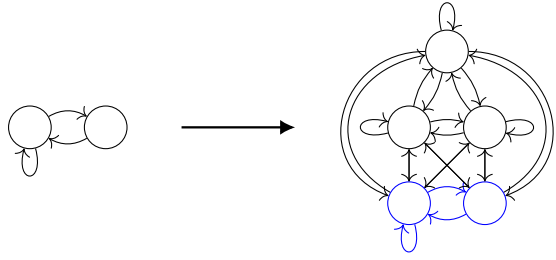
\begin{figure}
		\centering
		\begin{tikzpicture}
	\tikzstyle{automate} = [draw,circle,inner sep=0pt,minimum size=16pt]
    \tikzstyle{automateb} = [draw,circle,inner sep=0pt,minimum size=16pt, color=blue]
	\tikzstyle{arc} = [-{>[length=1mm]}]
    \tikzstyle{arcb} = [-{>[length=1mm]}, color=blue]

	\node[automate] (a0) at (-2,0.5) {};
	\node[automate] (a1) at (-1,0.5) {};
	
	\draw[line width=0.3mm, -{Latex[length=2mm, width=2mm]}] (0,0.5) -- (1.5,0.5);
	
	\node[automateb] (b0) at (3,-0.5) {};
	\node[automateb] (b1) at (4,-0.5) {};
	\node[automate] (b2) at (3,0.5) {};
	\node[automate] (b3) at (4,0.5) {};
	\node[automate] (b4) at (3.5,1.5) {};

    \draw[arc] (a0) to[bend left] (a1);
    \draw[arc] (a1) to[bend left] (a0);
    \draw[arc] (a0) to[loop below] (a0);

    \draw[arcb] (b0) to[bend left] (b1);
    \draw[arcb] (b1) to[bend left] (b0);
    \draw[arcb] (b0) to[loop below] (b0);
    
	\draw[arc] (b2) to[bend left=.5cm] (b3);
	\draw[arc] (b3) to[bend left=.5cm] (b4);
	\draw[arc] (b4) to[bend left=.5cm] (b2);
	
	\draw[arc] (b3) to[bend left=.5cm] (b2);
	\draw[arc] (b4) to[bend left=.5cm] (b3);
	\draw[arc] (b2) to[bend left=.5cm] (b4);
	
	\draw[arc] (b2) to[loop left] (b2);
	\draw[arc] (b3) to[loop right] (b3);
	\draw[arc] (b4) to[loop above] (b4);
	
	\draw[arc] (b4) to[out=180,in=160,looseness=1.5] (b0);
	
	\draw[arc] (b4) to[out=0,in=20,looseness=1.5] (b1);
	
	\draw[arc] (b1) to[out=30,in=-10,looseness=1.5] (b4);
	\draw[arc] (b0) to[out=150,in=190,looseness=1.5] (b4);
	
	\draw[arc] (b0) to (b2);
    \draw[arc] (b2) to (b0);
	\draw[arc] (b0) to (b3);
    \draw[arc] (b3) to (b0);
	\draw[arc] (b1) to (b2);
    \draw[arc] (b2) to (b1);
	\draw[arc] (b1) to (b3);
    \draw[arc] (b3) to (b1);

\end{tikzpicture}
		\caption{Example of transformation of a graph (left) into a graph of an MBAN able to solve DCT (right). The nodes and arcs in blue emphasize the original graph.}\label{fig:generator}
	\end{figure}
	\ \\
    Any generated MBAN $A$ can solve DCT.
    Intuitively, at the first iteration, the nodes of $S$ have for value the majority of the original configuration. 
    And since all nodes of $V$ have the majority of their in-neighbors in $S$, then at the end of the second iteration, each node of $V$ takes the value of the nodes of $S$. 
    More formally, the in-degree of all nodes in $S$ is $|V|$, it follows that $\val{A(x)}{s} = \majloc(x)$ for all $s \in S$. 
	Also, since $|N_{G'}(v) \cap S| =  |V| + 1 > |N_{G}(v)|$ for all $v \in V$, we have that $\val{\iteration{A}{i}{x}}{v} = \majloc(x)$ for any integer $i \ge 2$. 
	In summary:
	
	\begin{proposition}
		If $G$ is a graph of $n$ nodes, then there exists an MBAN over graph $G'$ with $2n + 1$ nodes, that admits $G$ as an induced subgraph, and is able to solve DCT.
	\end{proposition}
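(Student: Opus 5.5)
The plan is to construct $G'$ explicitly, following the generated-MBAN construction above, and then track the dynamics for two steps. Let $S=\{s_0,\ldots,s_n\}$ be $n+1$ fresh nodes and $V'=V\cup S$, so that $|V'|=2n+1$ is odd. The arcs of $E'$ are:
\begin{enumerate}[nosep]
\item the arcs of $E$;
\item all arcs $(u,s)$ with $u\in V'$ and $s\in S$, so that each $s\in S$ has in-neighbourhood the whole of $V'$;
\item all arcs $(s,v)$ with $s\in S$ and $v\in V$.
\end{enumerate}
No arc among nodes of $V$ is added, so $G$ is an induced subgraph of $G'$. I will take the in-neighbourhood of every $s\in S$ to be all of $V'$, as in the informal description. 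This matters: with in-neighbourhood $V$ only, $s$ would compute the majority of $x_V$ rather than of $x$, and ties could occur for even $n$.

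\textbf{Step 1 (first iteration).} For every $s\in S$, $N_{G'}(s)=V'$ has odd size $2n+1$, so there is no tie. Hence $\val{A(x)}{s}=\majloc(x)=:m$ for every $s\in S$, and all nodes of $S$ agree after one step.

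\textbf{Step 2 (the majority state is preserved from time $1$ on).} In $A(x)$ the $n+1$ nodes of $S$ all hold $m$. Since they form a strict majority of the $2n+1$ nodes, $\majloc(A(x))=m$. Repeating Step 1 from $A(x)$ shows that $S$ stays in state $m$ at every time $t\ge 1$, and so the global majority also stays equal to $m$ for all $t\ge 1$.

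\textbf{Step 3 (the nodes of $V$ are forced to $m$).} For $v\in V$ we have $N_{G'}(v)=N_G(v)\cup S$. Here $|N_G(v)|\le n$, while all $n+1$ nodes of $S$ hold $m$ at every time $t\ge1$. So at least $n+1$ in-neighbours of $v$ hold $m$, against at most $n$ holding $1-m$. This is a strict majority with no tie, hence $\val{\iteration{A}{t}{x}}{v}=m$ for all $t\ge2$.

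Consequently $\iteration{A}{2}{x}=m^{2n+1}$, and this configuration is a fixed point. For a configuration with a majority of $0$s this is $0^{2n+1}$, and for a majority of $1$s it is $1^{2n+1}$, so $A_{G'}$ solves DCT.

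The only real obstacle is the one flagged above: making sure the nodes of $S$ see the whole network. Otherwise Step 1 fails, because the majority of $x_V$ need not equal the majority of $x$, and it is exactly Step 1 that is needed. Everything else is a direct counting argument resting on the strict inequality $|S|=n+1>n\ge|N_G(v)|$.
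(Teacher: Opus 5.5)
Your proof is correct and follows the paper's argument. The $n+1$ new nodes see the whole network, so they hold $\majloc(x)$ after one step, and since $|S|=n+1>|N_G(v)|$ they impose that value on every $v\in V$ from the second step on; your Step~2 (the nodes of $S$ keep $m$ because they form a strict majority) and your insistence on $N_{G'}(s)=V'$ make explicit what the paper leaves implicit, since its enumerated arc list adds only arcs from $V$ into $S$ while its informal description and figure use the full in-neighbourhood $V'$, as you do.
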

	A direct consequence is that the number of MBANs able to solve DCT is exponential over the number of automata.  
	
	Our goal is to find some ``simple" class of MBAN able to solve DCT. 
	For this, we start by introducing 4 metrics that would allow us to measure the difficulty of an MBAN to solve the problem.  

\begin{enumerate}
    \item The number of edges in the graph: we consider that the higher the number of edges in the network (dependency between the automata), the more difficult for the network. 
	\item The number of different in-degrees: this is meant to approximate the number of different kinds of neighbourhoods in the MBAN. 
	\item The maximum in-degree of the graph: this measures the level of locality of an MBAN.
	\item The number of iterations needed for convergence. 
\end{enumerate}
    
	With these metrics in mind, we introduce the class of MBANs whose graph $G = (V,E)$ is a path rooted in $v_0$ and has an edge from each $v \in V$ to $v_0$. 
    For some intuitive reason, this kind of network can solve DCT. 
    Indeed, at the end of the first iteration, $v_0$ contains the majority of the original configuration (because all nodes of $V$ are in its in-neighborhood), and we propagate this value along the path. 
	Such a graph can also be regarded as a cycle of length $n$ that contains all the cycles of length between $1$ and $n-1$, that is, letting $C_i$ be the cycle of length $i$ in $G$, then this graph is such that $\bigcap_{i=1}^{n} C_i = C_1$. This is why we refer to this class as the one with \emph{complete cycle MBAN}; Figure~\ref{fig:complete_cycle} provides an example. 
 
    The following algorithm generates this kind of network from any size $n$: 

    \begin{enumerate}[nosep]
        \item Create $n$ nodes labeled from $0$ to $n-1$;
        \item For each node $i$ add an arc from $i$ to $i+1 \mod n$;
        \item For each node $i$ add an arc from $i$ to $0$.
    \end{enumerate}

	\begin{figure}
    	\centering
		\begin{tikzpicture}
	\tikzstyle{automate} = [draw,circle,inner sep=0pt,minimum size=16pt]
	\tikzstyle{arc} = [-{>[length=1mm]}]
	
	\def\sommets{{4,3,2,1,0,6,5}} 
	
	\foreach \i in {0,...,6} {
		\pgfmathtruncatemacro{\nexti}{\sommets[\i]};
		\node[automate] (A\nexti) at ({51.4286*\i}:2) {\pgfmathparse{\sommets[\i]}\pgfmathresult};
	}

	\foreach \i in {0,...,6} {
		\pgfmathtruncatemacro{\next}{mod(\i+1,7)};
		\draw[arc] (A\i) to[bend left,  looseness=0.5] (A\next);
	}
	
	\foreach \i in {1,...,5}{
		\draw[arc] (A\i) to[bend left, looseness=0.5] (A0);
	}
	
	\draw[arc] (A0) to[loop left] (A0);
\end{tikzpicture}
		\caption{The graph of the complete cycle MBAN of 7 nodes. For all $i$ between $0$ and $5$, the cycle $(0, \cdots, i, 0)$ is the cycle of length $i+1$.}
        \label{fig:complete_cycle}
	\end{figure}
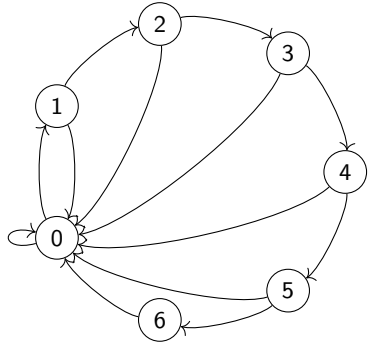
	
	\begin{lemma}\label{lemma:path_with_all}
		Let $A$ be a complete cycle MBAN.
		Then, $A$ solves DCT. 
	\end{lemma}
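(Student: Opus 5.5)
The plan is to read off the local functions directly from the graph, show that the majority state is invariant under one step of the dynamics, and then let the shift along the path carry that state to every node.

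\textbf{Step 1: local functions.} We first compute the in-neighbourhoods in the complete cycle MBAN on $V=[n]$, with $n$ odd. The arcs are $i\to i+1 \bmod n$ and $i\to 0$ for every $i$. The arc $n-1\to 0$ is produced by both rules but counted once, and $0\to 0$ is a self-loop.
\begin{itemize}
\item Node $0$ has $N_G(0)=V$. Its in-degree is $n$, which is odd, so there is never a tie and $\val{A(x)}{0}=\majloc(x)$.
\item Every node $i\ge 1$ has the single in-neighbour $i-1$. Hence $\val{A(x)}{i}=\val{x}{i-1}$.
\end{itemize}
So one step of $A$ maps $x=(x_0,\dots,x_{n-1})$ to $(\majloc(x),x_0,\dots,x_{n-2})$. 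In words, it shifts everything along the path, discards $x_{n-1}$, and writes the global majority into node $0$.

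\textbf{Step 2: the majority is invariant (the main point).} Let $m=\majloc(x)$. Passing from $x$ to $A(x)$, one entry $x_{n-1}$ is removed and one entry equal to $m$ is inserted; all other values are simply moved. There are two cases.
\begin{itemize}
\item If $x_{n-1}=m$, the number of nodes in state $m$ is unchanged.
\item If $x_{n-1}\neq m$, that number increases by one.
\end{itemize}
In both cases it stays strictly above $n/2$, so $\majloc(A(x))=m$. By induction, $\majloc(A^t(x))=m$ for all $t\ge 0$.

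\textbf{Step 3: convergence.} Combining Steps 1 and 2 gives $\val{A^t(x)}{0}=m$ for every $t\ge 1$. The shift relation gives $\val{A^t(x)}{i}=\val{A^{t-1}(x)}{i-1}$ for $i\ge 1$. An induction on $i$ then shows $\val{A^t(x)}{i}=m$ whenever $t\ge i+1$. Therefore $A^n(x)=m^n$.

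Finally, $m^n$ is a fixed point: its majority is $m$, and the shift preserves a constant configuration. Hence every configuration converges to the fixed point given by its initial majority, within $n$ iterations, which is exactly the statement of Lemma~\ref{lemma:path_with_all}.
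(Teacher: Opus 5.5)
Your proof is correct and takes essentially the paper's approach: node $0$ computes the global majority, every other node copies its predecessor, the number of nodes in the majority state never decreases (so the majority is invariant), and the shift then carries that state to all nodes. Your single counting observation, that the discarded value $x_{n-1}$ is replaced by $m$, does the work of the paper's case split on $x_{v_1}$ and its subcases, and it gives the clean bound of $n$ iterations directly.
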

	
	\begin{proof}
		Let $x$ be a configuration such that $\majority{x} = 1$.
		Let $G = (V,E)$ be the graph of $A$ such that $V = \{v_1, \ldots, v_n\}$ and $E = \bigcup_{i=1}^{n} E_i$, with $(\{v_1, \ldots v_i\}, E_i)$ the cycle of length $i$.
		Since $\majority{x} = 1$, then $\val{A(x)}{v_1} = 1$.
		Two cases are possible.
\begin{enumerate}
		\item $\val{x}{v_1} = 1$, therefore  $\val{A(x)}{v_2} = 1$.
		Now, since $\val{A(x)}{v_i} = \val{x}{v_{i-1}}$ for all $i$ between $2$ and $n$, we deduce that $\nbAx{T}{\{v_3, \ldots v_n\}} = \nbx{T}{\{v_2,\ldots, v_{n-1}\}} \ge \nbx{x}{\{v_2,\ldots, v_{n}\}} - 1$. 
		Consequently, since \\ $\nbAx{T}{\{v_2, \ldots v_n\}} = \nbAx{T}{\{v_3, \ldots v_n\}} + 1$, it follows that \\ $\nbAx{T}{\{v_2, \ldots v_n\}} \ge \nbx{x}{\{v_2,\ldots, v_{n}\}}$.
        
		At this point, three subcases are possible: $\val{x}{v_2} = 1$ and $\val{x}{v_n} = 0$;  $\val{x}{v_2} = 1$ and $\val{x}{v_n} = 1$; or $\val{x}{v_2} = 0$.
		For the first one, we have $\nbAx{T}{\{v_2, \ldots v_n\}} = \nbx{x}{\{v_2,\ldots, v_{n}\}} + 1$, and for the second, \\ $\nbAx{T}{\{v_2, \ldots v_n\}} = \nbx{x}{\{v_2,\ldots, v_{n}\}}$. From both, we conclude that $\majority{A(x)} = 1 = \val{A(x)}{v_1} = \val{A(x)}{v_2}$, and inductively, we go back to one of the previous subcases. 
		Thus, once again, since $\val{A(x)}{v_i} = \val{x}{v_{i-1}}$ for all $i$ between $2$ and $n$, we deduce that $\nbAit{T}{V}{j} = n$ for all $j \ge n-3$.
		For the third subcase, we have $\majority{A(x)} = 1 = \val{A(x)}{v_1} = \val{A(x)}{v_2}$, which leads back to the previous subcases, and inductively,  $\nbAit{T}{V}{j} = n$ for all $j \ge n-2$.
		
		\item We assume that $\val{x}{v_1} = 0$, therefore since $\val{A(x)}{v_i} = \val{x}{v_{i-1}}$ for all $i$ between $2$ and $n$, it follow that $\nbAx{T}{\{v_2, \ldots, v_n\}} = \nbx{T}{\{v_2, \ldots, v_{n-1}\}} \ge \nbx{T}{\{v_2, \ldots, v_n\}} - 1$. 
		In addition, $ \nbx{T}{\{v_2, \ldots, v_{n}\}} \ge \psup$. 
		Consequently, $\nbAx{T}{\{v_2, \ldots, v_n\}} \ge \psup - 1$. 
		Then, it follow that $\nbAx{T}{V} \ge \psup$. 
		By the first case, we conclude that $\nbAit{T}{V}{j} = n$ for all $j \ge n$.  
\end{enumerate}
		By applying the same reasoning now with $\majority{x} = 0$, the lemma follows.
	\end{proof}
	
	Let $n \ge 5$ be an odd number and $\mathcal{A}_n$ be the set of MBANs able to solve DCT, such that the graph of each $A \in \mathcal{A}_n$ has $n$ nodes and contains at least one node with in-degree $n$. 
	Remark that the MBAN of $\mathcal{A}_n$ with the minimal number of edges in its graph is the complete cycle MBAN of $n$ nodes. 
	
	We say that an MBAN is \emph{non-omniscient} if the in-degree of its graph $G=(V,E)$ is less than $|V|$; more formally, for each $v \in V, |N_G(v)| < |V|$.  
	Thus, if we want to reduce again the number of edges in the graph of the MBAN able to solve DCT, we have to study non-omniscient MBANs (actually, we do not yet know if there exists an MBAN family of bounded in-degree able to solve DCT, which would be \emph{local} in a stronger sense). Accordingly, we now introduce three classes of this kind of MBAN, all of them with a number of edges larger than $2 n$. 
	The first class (Sections~\ref{house_MBAN}) is one where the automata of the MBANs depend on $n - 2$ automata, whereas the second class is one where half of the automata in the MBAN depend on $n-2$ automata and the other half depend only on one automaton. 
	
	An interesting future work could be to show that the complete cycle MBAN of $n$ nodes is the network of $n$ automata able to solve DCT whose graph is the one with a minimal number of edges.  

	\subsection{Class of complementary-left-right MBAN}
\label{doubl_cor}

In this section, we talk about a network that has been found experimentally. 
The reason why this network solves DCT appears to us to be purely technical. 
Therefore, we cannot provide insightful intuition on how it works. 

Let $n \ge 7$ be an odd integer and $V = \{0, \ldots, n-1\}$. 
From $V$ we define a the following sequences: $U = (0,2,\ldots, \pinf - 1)$, $R = (1,\pinf, \ldots, n-1)$; $S = (0,1,\ldots, \pinf - 2)$ and $T = (\pinf - 1, \ldots, n-1)$, the latter two useful for defining the neighbourhood of $U$ and $R$, respectively. 
We denote by $S_i$ (resp., $T_i,U_i, R_i$) the $i$-th element of $S$ (resp., $T_i,U_i, R_i$)

Let $G = (V,E)$ be the graph such that in $E$ : 
\begin{itemize}[nosep]
    \item each node $U_i$ in $U$ have all nodes of $V \backslash \{S_{i + 1 \mod |S|}, S_{i + 2 \mod |S|} \}$ for predecessors,
    \item each node $R_i$ in $R$ have all nodes of $V \backslash \{T_{i -1 \mod |T|}, T_{i -2 \mod |T|}\}$ for predecessors
\end{itemize}
See Figure~\ref{fig:complemetary_corolla} for an example.
Since the complement of this graph is more readable than the graph itself, we also provide it. 
In addition, the proofs are easier to follow from the point of view of the complement.  

We say that the MBAN whose graph is $G$ is the \emph{complementary-left-right} MBAN of $n$ nodes, and that the class of complementary-left-right MBANs is the set of all the complementary-left-right MBANs.
The following algorithm can generate this kind of network for any size $n$: 

    \begin{enumerate}[nosep]
        \item Create $n$ nodes labeled from $0$ to $n-1$;
        \item Add all the possible arcs; 
        \item Create $4$ arrays $U= [0,2,\ldots, \pinf-1], R= [1, \pinf,\ldots, n -1], S = [0,\ldots, \pinf-2]$ and $T = [\pinf-1, \ldots, n-1]$;
        \item Remove the arcs $(S[i+1 \mod |S|], U[i])$ and $(S[i+2 \mod |S|], U[i])$ for each $i$ from $0$ to $|S|-1$;
        \item Remove the arcs $(T[i-1 \mod |T|], R[i])$ and $(T[i-2 \mod |T|], R[i])$ for each $i$ from $0$ to $|T|-1$.
    \end{enumerate}

\begin{figure}
	\centering
	\begin{minipage}[b]{0.5\linewidth}
   \centering
    \begin{tikzpicture}
	\tikzstyle{automate} = [fill=white, draw,circle,inner sep=0pt,minimum size=16pt]

	\tikzstyle{arc} = [-{>[length=1mm]}]
	
	\node[automate] (a2) at (-1,0.5)  {2};
	\node[automate] (a0) at (1,2)     {0};
	\node[automate] (a3) at (0,-1)    {3};
	\node[automate] (a1) at (0,1.5)   {1};
	\node[automate] (a6) at (2,1.5)   {6};
	\node[automate] (a5) at (2,-1)    {5};
	\node[automate] (a4) at (3,0)     {4};

    \draw[arc] (a0) to[loop above]  (a0);
    \draw[arc] (a0) to  (a1);
    \draw[arc] (a0) to[bend right]  (a4);
    \draw[arc] (a0) to  (a5);
    \draw[arc] (a0) to  (a6);

    \draw[arc] (a1) to[loop above]  (a1);
    \draw[arc] (a1) to  (a3);
    \draw[arc] (a1) to  (a4);
    \draw[arc] (a1) to  (a5);
    \draw[arc] (a1) to  (a6);

    \draw[arc] (a2) to  (a0);
    \draw[arc] (a2) to  (a1);
    \draw[arc] (a2) to  (a2);
    \draw[arc] (a2) to  (a5);
    \draw[arc] (a2) to  (a6);

    \draw[arc] (a3) to  (a0);
    \draw[arc] (a3) to  (a1);
    \draw[arc] (a3) to  (a2);
    \draw[arc] (a3) to[loop left]  (a3);
    \draw[arc] (a3) to  (a6);

    \draw[arc] (a4) to [bend left] (a0);
    \draw[arc] (a4) to  (a1);
    \draw[arc] (a4) to  (a2);
    \draw[arc] (a4) to  (a3);
    \draw[arc] (a4) to[loop right]  (a4);

    \draw[arc] (a5) to  (a0);
    \draw[arc] (a5) to  (a2);
    \draw[arc] (a5) to  (a3);
    \draw[arc] (a5) to  (a4);
    \draw[arc] (a5) to[loop right]  (a5);

    \draw[arc] (a6) to  (a0);
    \draw[arc] (a6) to  (a2);
    \draw[arc] (a6) to  (a4);
    \draw[arc] (a6) to  (a5);
    \draw[arc] (a6) to[loop above]  (a6);

\end{tikzpicture}
\end{minipage}\hfill
\begin{minipage}[b]{0.5\linewidth}
   \centering
    \begin{tikzpicture}
	\tikzstyle{automate} = [draw,circle,inner sep=0pt,minimum size=16pt]

	\tikzstyle{arc} = [-{>[length=1mm]}]
	
	\node[automate] (a2) at (-1,0.5)  {2};
	\node[automate] (a0) at (1,0.5)   {0};
	\node[automate] (a3) at (0,-1)    {3};
	\node[automate] (a1) at (3,0.5)   {1};
	\node[automate] (a6) at (4.5,0.5) {6};
	\node[automate] (a5) at (3,-1)    {5};
	\node[automate] (a4) at (4.5,-1)  {4};

    \draw[arc,loop below] (a0) to   (a0);
    \draw[arc] (a1) to  (a0);
    \draw[arc] (a0) to  (a2);
    \draw[arc] (a1) to [bend right] (a2);
    \draw[arc] (a5) to  (a1);
    \draw[arc] (a6) to  (a1);
    \draw[arc] (a2) to  (a3);
    \draw[arc] (a6) to  (a3);
    \draw[arc] (a2) to [bend right] (a4);
    \draw[arc] (a3) to [bend right] (a4);
    \draw[arc] (a3) to  (a5);
    \draw[arc] (a4) to  (a5);
    \draw[arc] (a4) to  (a6);
    \draw[arc] (a5) to  (a6);
\end{tikzpicture}
\end{minipage}
	\caption{The complementary-left-right MBAN with $7$ nodes on the left and its complement on the right.  According to the definition, $U = (0,2)$, $R = (1,3,4,5,6)$, $S = (0,1)$ and $T = (2,3,4,5,6)$. For example, the node $0$ has for in-neighbours all the nodes except $S_0 = 0$ and $S_1 = 1$; while the node $1$ has for in-neighbours all the nodes except $T_4 = 6$ and $T_3 = 5$.}\label{fig:complemetary_corolla}
\end{figure}
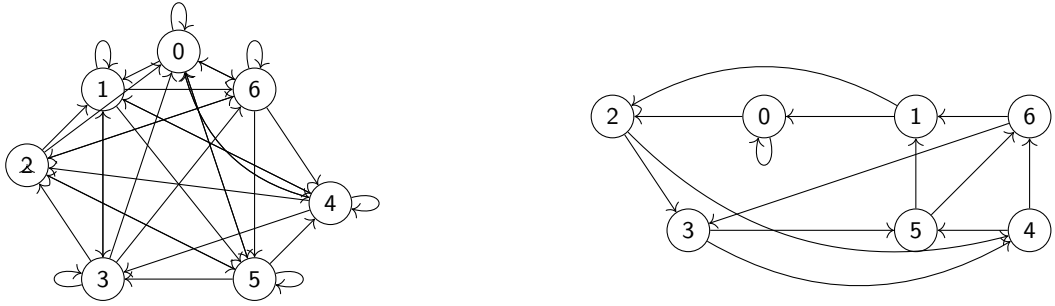

\ \\
Now we have two main goals. The first is to show that a complementary-left-right MBAN is able to solve DCT. The second is to provide a bound for the number of iterations needed by such MBANs to converge. 

So, let $A$ be a complementary-left-right MBAN, $G = (V,E)$ its graph, $n$ the number of automata in $A$, and $x$ one of its configurations such that $\majority{x} = 1$.
In order to prove that $A$ converges to $1$, we start by showing that $\nbAit{T}{V}{i} \le \nbAit{T}{V}{i+1}$ for any integer $i \ge 0$.  
To this end, we remark that two cases are possible: either,  the number of 1s in $x$ is at least $\psup + 1$ (\emph{i.e.}, $\nbx{T}{V} \ge \psup + 1$), or the number of 1s in $x$ is $\psup$ (\emph{i.e.}, $\nbx{T}{V} = \psup$). 
First, we assume that $\nbx{T}{V} \ge \psup + 1$.

\begin{lemma}\label{lemma:convergence_halft_plus_one}
    Let $A$ be an MBAN of $n$ automata whose graph $G = (V, E)$ is such that the in-degree of each node of $V$ is at least $n-2$. 
	If $\nbx{T}{V} \ge \psup + 1$, then $\nbAx{T}{V} = n$.
\end{lemma}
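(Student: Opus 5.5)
The proof is a direct counting argument carried out node by node. Fix a node $v\in V$ and let $d = |N_G(v)|$, so that $n-2 \le d \le n$ by hypothesis. I would show that the in-neighbourhood of $v$ contains strictly more than $d/2$ automata in state $1$. Then $\majloc^v_{N_G(v)}(x) = 1$ by the definition of the local majority function, and the tie case never comes into play. Since $v$ is arbitrary, every automaton takes state $1$ after one step, which gives $\nbAx{T}{V} = n$.

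The key step is a lower bound on $\nbx{T}{v}$. At most $n-d$ automata of $V$ lie outside $N_G(v)$. In the worst case all of them are in state $1$, so
\[
\nbx{T}{v} \;\ge\; \nbx{T}{V} - (n-d) \;\ge\; \psup + 1 - n + d .
\]
Since $n$ is odd, $\psup = (n+1)/2$, and the right-hand side equals $d - (n-3)/2$.

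It then remains to check that $d - (n-3)/2 > d/2$. This is equivalent to $d > n-3$, which holds because $d \ge n-2$. The inequality is strict, so there is no tie, and the majority among the in-neighbours of $v$ is indeed $1$. As a sanity check I would verify the extreme case $d = n-2$ directly. There the bound gives at least $(n-1)/2$ ones among $n-2$ in-neighbours, and $(n-1)/2 > (n-2)/2$, so $v$ still updates to $1$.

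There is no real obstacle in this argument. The only delicate points are to use the oddness of $n$ when evaluating $\psup$, and to make sure the inequality is strict so that the tie clause cannot keep $v$ in state $0$. The argument applies uniformly to every node, whatever the exact shape of its in-neighbourhood, so it applies in particular to the complementary-left-right MBANs, where every in-degree is $n-2$.
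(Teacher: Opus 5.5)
Your proof is correct and is essentially the paper's argument. The paper counts zeros instead of ones: $x$ has at most $(n-3)/2 < \lfloor n/2 \rfloor$ zeros, and every in-neighbourhood has size at least $n-2$, so zeros can never reach half of it. Your lower bound $d-(n-3)/2$ on the ones in $N_G(v)$ is exactly $d$ minus that bound on the zeros, so the two counts coincide.
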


\begin{proof}
    Let us assume that $\nbx{T}{V} \ge \psup + 1$. 
    Since $\nbx{T}{V} \ge \psup + 1$, it follows that $\nbx{F}{V} < \pinf$. 
    Remark that the upper half of the in-degree of $v$ is at least $\pinf$ for each node $v$ of $V$. 
    Therefore, $\val{A(x)}{v} = 1$ for each $v$ in $V$.
\end{proof}

Now, assuming that $\nbx{T}{V} = \psup$, we distinguish three cases:
\begin{enumerate}
\item All the nodes of $S$ are 0 in $x$ (conversely, $\nbx{T}{T} = \psup$); \item All the nodes of $S$ are 1 in $x$ ($\nbx{T}{S} = \pinf - 1$); or 
\item Only some nodes of $S$ are 1 in $x$ ($0 < \nbx{T}{S} < \pinf - 1$). 
\end{enumerate}

--- For Case 1, if we assume that $\nbx{T}{T} = \psup$, then since each node $u$ of $U$ admits that all the nodes of $T$ are predecessor ($N_G(u) \cap T = T$), therefore $\nbAx{T}{U} = |U| = \pinf - 1$. 
Moreover, since $|T| = \psup + 1$, it follows that  $\nbx{F}{T} = 1$. 
Thus, $\nbAx{T}{R} = 2$.
Indeed, since each node of $T$ is the predecessor of exactly $\pinf$ nodes of $R$, two nodes of $R$ admit the $\psup$ nodes of $T$ whose value is 1. 
Hence, $\nbAx{T}{V} = \pinf - 1 + 2 = \psup$. 
In addition, $A(x)$ is either in the Cases 2 or 3.    

--- For Case 2, assuming that $\nbx{T}{S} = \pinf - 1$ then $\nbx{T}{T} = 2$. 
Let $t_1,t_2$ be the nodes of $T$ such that $\val{x}{t_1} = \val{x}{t_2} = 1$. From the definition of the graph, there is at most one node $a$ of $R$ whose the neighbourhood does not contain $t_1$ and $t_2$ ($N_G(a) \cap \{t_1,t_2\} = \emptyset$). 
Moreover, since the neighbourhood of each node $r$ of $R$ contains $S$ ($N_G(r) \cap S = S$), it follows that $\val{A(x)}{r} = 1$ for every node $r$ of $R$ that admits $t_1$ or $t_2$ in its neighbourhood ($N_G(r) \cap \{t_1,t_2\} \neq \emptyset$). 
Thus, $\nbAx{T}{R} \ge |R| - 1 = \psup$.
Also, since the neighbourhood of each node in $u$ is composed by $|S| - 2$ nodes of $S$ plus $|T|$ nodes of $T$, therefore $\nbx{T}{N_G(u)} = \pinf - 1$. Thus, $\nbAx{T}{U} = 0$. 
We conclude that, $A(x)$ is either in Cases 1 or 3.

%

%

--- For Case 3, let us assume that there exists an integer $0 < i < \pinf - 1$ such that $\nbx{T}{S} = i$, and proceed by introducing a technical lemma that will be used for deriving a lower bound for $\nbAx{T}{U}$ and $\nbAx{T}{R}$.
For this, we define that, given a sequence $\alpha = (\alpha_0, \ldots, \alpha_m)$, $p$ is a \emph{periodic consecutive subsequence} of $\alpha$ if it contains at least two elements and there exists $j \in \{0, \ldots, m\}$ such that $p = (\alpha_j, \alpha_{(j+1) \mod |p|}, \ldots, \alpha_{(j + |p| - 1) \mod |p|})$.

\begin{lemma}\label{lemma:peridoic_sub_seq}
	Let $\alpha$ be a sequence of $m$ elements.
	Let $|\alpha| - 1 > a > 1$ be an integer and $\beta$ be a subsequence of $a$ elements of $\alpha$.
	Let $m_1$ be the number of periodic consecutive subsequences of $\alpha$ with size $m-2$ that contains at least $a - 1$ elements of $\beta$, and $m$ be the number of periodic consecutive pairs $\alpha$ in $\beta$. Then $m_1 = m - m_3$.
\end{lemma}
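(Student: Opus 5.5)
We read $\alpha=(\alpha_0,\ldots,\alpha_{m-1})$ cyclically, $\beta$ as a set of $a$ positions of $\alpha$, and $m_3$ as the number of periodic consecutive pairs $(\alpha_j,\alpha_{j+1 \bmod m})$ of $\alpha$ with both elements in $\beta$. The statement's symbol clash (writing $m$ both for the length of $\alpha$ and for the pair count) is resolved this way. The plan is a complementation argument followed by a double count. The key observation is that a periodic consecutive subsequence of length $m-2$ is exactly $\alpha$ with one periodic consecutive pair removed.

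\textbf{Step 1 (bijection).} For each $j\in\{0,\ldots,m-1\}$, let $W_j=(\alpha_{j+2},\ldots,\alpha_{j+m-1})$ with indices mod $m$. Its complement in $\alpha$ is the pair $P_j=\{\alpha_j,\alpha_{j+1}\}$. The map $j\mapsto W_j$ is a bijection from $\{0,\ldots,m-1\}$ onto the periodic consecutive subsequences of size $m-2$. Distinct $j$ give distinct excluded pairs as long as $m\ge 3$, which the hypothesis $m-1>a>1$ guarantees. Hence there are exactly $m$ such windows, one per consecutive pair.

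\textbf{Step 2 (counting $\beta$ in a window).} Since $W_j$ and $P_j$ partition $\alpha$, we have $|W_j\cap\beta| = a-|P_j\cap\beta|$. Therefore $W_j$ contains at least $a-1$ elements of $\beta$ if and only if $|P_j\cap\beta|\le 1$. Equivalently, $W_j$ fails the condition exactly when both elements of $P_j$ lie in $\beta$.

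\textbf{Step 3 (conclusion).} The failing windows are in bijection with the periodic consecutive pairs of $\alpha$ contained in $\beta$, and there are $m_3$ of these. Subtracting from the total of $m$ windows gives $m_1=m-m_3$.

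The main obstacle is not mathematical but interpretive: pinning down the intended definitions, since the stated definition of periodic consecutive subsequence uses ``$\bmod |p|$'' where ``$\bmod |\alpha|$'' is clearly meant. I would state the corrected reading explicitly at the start. I would also check the degenerate cases the bounds exclude. With $a\le m-2$ and $m\ge 4$, no window is counted twice, and the condition ``at least $a-1$'' is not vacuous.
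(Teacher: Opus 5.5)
Your proof is correct and takes the same route as the paper's. Both identify each periodic consecutive window of length $m-2$ with the consecutive pair it omits, and both note that the window contains fewer than $a-1$ elements of $\beta$ exactly when that omitted pair lies in $\beta$, which gives $m - m_1 = m_3$. Your explicit indexing $W_j \leftrightarrow P_j$ and your corrected reading of the typos ($\bmod |\alpha|$, and $m_3$ for the pair count) are what the paper's argument implicitly relies on.
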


\begin{proof}
	Let $P$ be the set of periodic consecutive subsequences of $\alpha$ with size $m - 2$ containing at most $a - 2$ elements of $\beta$, and $p$ be an element of $P$. 
	Since $\beta$ is a subsequence of $\alpha$, $p$ contains exactly $a - 2$ elements of $\beta$. 
	In the other case, the length of $p$ is less than $m-2$.
	Therefore $|P| = m - m_1$.
	
	Let $C$ be the set of periodic consecutive pairs of $\alpha$ in $\beta$. 
	Thus, the two absent elements in $p$ of $\alpha$ are in $\beta$.
	Therefore, since  $\alpha \backslash p$ is a periodic consecutive pair, for each element of $P$ there exists an element of $C$. 
	Reversely, if we consider an element $c$ of $C$, directly we have $\mathcal{S}  \backslash c$ is a periodic consecutive sequence of length $m-2$ with at most $a-2$ elements of $\beta$. 
	Thus, $m_3 = |P|$.
	The overall conclusion therefore is that $m_1 = m - m_3$.
\end{proof}

\begin{lemma}\label{lemma:double_corolla_between}
	Let $i$ be an integer between $1$ and $\pinf - 2$.
	If $\nbx{T}{S} = i$ then $\nbAx{T}{V} \ge \psup + 1$.
\end{lemma}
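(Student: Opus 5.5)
We work under the standing assumption of this case analysis, namely $\nbx{T}{V} = \psup$, together with $\nbx{T}{S} = i$ for some $1 \le i \le \pinf - 2$. The plan is to count the nodes that become $0$ rather than those that become $1$, and to bound that count by $\pinf - 1$. Note that $\pinf-1 \ge 2$ because $n \ge 7$.

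\textbf{Step 1: a local criterion for each node.} Every node has in-degree $n-2$, which is odd, so there are no ties. A node $v$ therefore becomes $1$ exactly when at least $\frac{n-1}{2} = \pinf$ of its in-neighbours are $1$. The in-neighbourhood of $v$ is $V$ minus two nodes. Since $x$ has exactly $\psup = \pinf + 1$ ones, $\val{A(x)}{v} = 0$ if and only if both missing nodes are $1$ in $x$.

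By construction, the missing pair of $U_i$ is $\{S_{i+1}, S_{i+2}\}$, a cyclically consecutive pair of $S$. Likewise, the missing pair of $R_i$ is $\{T_{i-1}, T_{i-2}\}$, a cyclically consecutive pair of $T$. Since $i \mapsto i+1$ and $i \mapsto i-2$ are bijections modulo $|S|$ and $|T|$, each cyclically consecutive pair of $S$ (resp.\ $T$) is the missing pair of exactly one node of $U$ (resp.\ $R$). Hence
\[
\nbAx{F}{V} = c_S + c_T ,
\]
where $c_S$ and $c_T$ count the cyclically consecutive pairs in $S$ and in $T$ whose two elements are both $1$ in $x$. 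This is the role of Lemma~\ref{lemma:peridoic_sub_seq}, though I would give the direct argument above.

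\textbf{Step 2: bounding $c_S$ and $c_T$.} We use the following elementary fact. On a cycle of length $L$ carrying $k$ ones with $0<k<L$, the ones form $b \ge 1$ maximal blocks, so the number of adjacent $1$--$1$ pairs is $k-b \le k-1$.

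For $S$ we have $k=i$ with $0<i<|S| = \pinf - 1$, hence $c_S \le i-1$.

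For $T$, which has $|T| = \psup+1$ elements, the number of ones is $\psup - i$. This is at least $3$ because $i \le \pinf-2$. It is also at most $|T|-2$ because $i \ge 1$. Hence $c_T \le \psup - i - 1$.

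Summing gives $\nbAx{F}{V} \le \psup - 2 = \pinf - 1$. Therefore
\[
\nbAx{T}{V} \ge n - \pinf + 1 = \psup + 1 ,
\]
as claimed.

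\textbf{Main obstacle: small cycles.} The delicate point is the bijection between nodes and cyclic pairs when the cycle is short. For $|S| = 2$ (that is, $n=7$), both nodes of $U$ miss the same unordered pair $\{S_0,S_1\}$, so that pair is counted twice. This is harmless: in that case $i=1$, so the pair is never $1$--$1$ and $c_S = 0$ still holds. For $|S| \ge 3$, and always for $T$ since $|T| \ge 5$, the $|S|$ (resp.\ $|T|$) cyclic pairs are pairwise distinct, and the counting in Step 1 is exact.

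I would state this degenerate case explicitly, and double-check the index conventions of the construction (the $+1,+2$ shifts for $U$ and the $-1,-2$ shifts for $R$) against Figure~\ref{fig:complemetary_corolla} to make sure each node indeed misses a consecutive pair.
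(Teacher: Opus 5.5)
Your proof is correct and follows essentially the same route as the paper: a node of $U$ (resp.\ $R$) becomes $0$ exactly when its missing cyclically consecutive pair of $S$ (resp.\ $T$) is all ones, and the number of such pairs is at most $i-1$ in $S$ and $\psup - i - 1$ in $T$, so there are at most $\pinf - 1$ zeros. The differences are presentational: your direct block-counting argument replaces the paper's appeal to Lemma~\ref{lemma:peridoic_sub_seq} and its separate treatment of $i=1$, and you handle the $n=7$ double-counting explicitly, which the paper does not.
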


\begin{proof}
	We consider two possibilities, according to the value of $i$:
    
	--- $i = 1$:
	Hence, $\nbx{T}{T} = \pinf$ and,  
	consequently, since $N_G(u) \cap T = T$ for each $u \in U$, $\nbAx{T}{U} = |U| = \pinf - 1 = \pinf - i$. 
	
    --- $i > 1$:
	Let $u \in U$ be a node and notice that $N_G(u) \cap T = T$. Thus, $\nbx{T}{N_G(u) \cap T} = \psup - i$.
	We deduce that $\val{A(x)}{u} = 1$ if and only if $u \in U_o$ with $U_o := \{u_o \in U \mid \nbx{T}{N_G(u_o) \cap S} \ge i - 1\}$.
	Indeed, if $\nbx{T}{N_G(u) \cap S} \le i - 2$, then $\nbx{F}{N_G(u) \cap S} \ge \pinf - 1 - i$, and since   $\nbx{F}{N_G(u) \cap T} = i + 1$, it follows that $\nbx{F}{N_G(u)} \ge \pinf$.  
	Thus, if we consider $U_o$ as a subsequence of $U$ and since the maximum number of periodic consecutive pairs in $U_o$ is $i - 1$, from Lemma~\ref{lemma:peridoic_sub_seq}, we have that $|U_o| \ge \pinf - 1 - (i - 1) = \pinf - i$. 
	Therefore, the conclusion is that, $\nbAx{T}{U} \ge \pinf - i$.
	
	Now, let $r$ be a node of $R$.
	Note that $\val{A(x)}{r} = 1$ if and only if $r \in R_o$ with $R_o := \{r_o \in R \mid \nbx{T}{N_G(r_o) \cap T} \ge \pinf - i\}$.
	Indeed, if $\nbx{T}{N_G(r) \cap T} \le \pinf - i - 1$, then $\nbx{F}{N_G(r) \cap T} \ge i + 1$, and since  $\nbx{F}{N_G(r) \cap S} = \pinf - i - 1$, it follows that $\nbx{F}{N_G(u)} \ge \pinf$.  
	Thus, if we consider $R_o$ as a subsequence of $R$ and since the maximum number of periodic consecutive pairs in $R_o$ is $\psup - i - 1 = \pinf - i$, from Lemma~\ref{lemma:peridoic_sub_seq}, we have that $|R_o| \ge \psup + 1 - (\pinf - i) = i + 2$. 
	Hence, we conclude that $\nbAx{T}{V} \ge \pinf - i + i + 2 = \psup + 1$. 
\end{proof}

At this point, we have shown that $\nbAit{T}{V}{i} \le \nbAit{T}{V}{i+1}$, for every integer $i \ge 0$. 
Therefore, if $A$ converges then $A$ convergence to the correct configuration. 
It thus remains to prove that it converges. 

\begin{theorem}
    Each complementary-left-right MBAN is able to solve  DCT and it converges after at most $4$ iterations. 
\end{theorem}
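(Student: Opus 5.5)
The argument is a case analysis on the configuration $x$. By the symmetric roles of $0$ and $1$, it suffices to treat $\majority{x}=1$, so $\nbx{T}{V}\ge\psup$. The results above already describe one step of the dynamics for every such $x$. The plan is to chain them into a finite transition diagram on the cases and bound the length of any path in it.

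\textbf{Step 1 (the absorbing case).} If $\nbx{T}{V}\ge\psup+1$, Lemma~\ref{lemma:convergence_halft_plus_one} applies, since every node of a complementary-left-right MBAN has in-degree exactly $n-2$. It gives $A(x)=1^n$, and $1^n$ is clearly a fixed point. So one iteration suffices from this case.

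\textbf{Step 2 (the three subcases of $\nbx{T}{V}=\psup$).} Here we have $x$ in Case 1, 2 or 3.
\begin{itemize}
\item From Case 3, Lemma~\ref{lemma:double_corolla_between} gives $\nbAx{T}{V}\ge\psup+1$. Step 1 then gives $A^2(x)=1^n$, so two iterations suffice.
\item From Case 1, the computation before Lemma~\ref{lemma:peridoic_sub_seq} shows that $A(x)$ has exactly $\psup$ ones, with all of $U$ set to $1$ and exactly two ones in $R$.
\item From Case 2, the analysis shows that $A(x)$ has all of $U$ at $0$ and at least $|R|-1$ ones in $R$.
\end{itemize}
In Case 2, if $\nbAx{T}{R}=|R|$, then $A(x)$ has more than $\psup$ ones and Step 1 finishes. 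Otherwise $A(x)$ again has $\psup$ ones.

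\textbf{Step 3 (no Case 1--Case 2 oscillation).} It remains to exclude a loop between Cases 1 and 2. I would sharpen the claims ``$A(x)$ is in Case 2 or 3'' and ``$A(x)$ is in Case 1 or 3'' by tracking which nodes of $S$ and $T$ are $1$, using $S\cap T=\{\pinf-1\}$ and the membership of $U$ and $R$ in $S$ and $T$.
\begin{itemize}
\item From Case 2, the image has all of $U$ at $0$. But $S$ contains $1\in R$, and whenever $|S|\ge 3$ it also contains nodes of $U$, as does $T$. So the image has some $1$ in $S$ (node $1\in R$ is $1$ unless it is the unique exceptional node of $R$) and is not all-$1$ on $S$. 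Hence it lands in Case 3, except possibly in one configuration where the exceptional node of $R$ is node $1$. In that configuration I would check directly that the image is in Case 3 or has $\ge\psup+1$ ones.
\item From Case 1, the image has $U$ all $1$, so $S\cap U\ne\emptyset$ carries ones. $S$ is all-$1$ only if the two ones of $R$ cover $S\cap R$ and $\pinf-1$. Showing this can happen at most once before reaching Case 3 gives the path Case 1 $\to$ Case 2 $\to$ Case 3 $\to$ ($\ge\psup+1$) $\to 1^n$.
\end{itemize}
This yields at most $4$ iterations, which matches the bound in the statement. Convergence to the right fixed point is automatic, since the number of ones never drops below $\psup$.

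\textbf{Main obstacle.} The hard part is Step 3: certifying that the Case 1/Case 2 images really land in Case 3 after a bounded number of steps, and that no $2$-cycle between Cases 1 and 2 exists. This needs explicit knowledge of which two nodes of $R$ become $1$ from Case 1, and which node of $R$ is exceptional from Case 2. I would compute these from the neighbourhood definition $V\setminus\{T_{i-1},T_{i-2}\}$. Concretely, the two nodes of $R$ that see all ones of $T$ are the $R_i$ whose excluded pair $\{T_{i-1},T_{i-2}\}$ contains the unique $0$ of $T$. These are consecutive indices, so I would verify whether $S$ can be entirely $1$ in the image, handling small $n$ such as $n=7$ separately by direct inspection if needed.
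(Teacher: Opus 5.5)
Your overall plan matches the paper's: use the one-step lemmas, then do a Case~1/Case~2 analysis tracking node $1$ and node $\pinf-1$. However, Step~3 is the only part of the theorem not already covered by the lemmas, and you leave it unproved. The one concrete prediction you make there is also false.

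Take the Case~2 configuration whose two ones in $T$ are $n-2$ and $n-1$. Its exceptional node of $R$ is $R_{|T|}=R_0=1$. So the image has all of $U$ at $0$ and node $1$ at $0$, hence all of $S$ at $0$ and exactly $\psup$ ones. It therefore lies in Case~1, not in Case~3, and it does not have $\psup+1$ or more ones. The ``direct check'' you propose would fail. The path Case~2 $\to$ Case~1 is real; it is one of the two paths attaining the bound $4$ in the paper's proof, yet your final chain lists only Case~1 $\to$ Case~2 $\to$ Case~3.

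Your Case~1 bullet also rests on a wrong fact. The sets $S=\{0,\dots,\pinf-2\}$ and $T=\{\pinf-1,\dots,n-1\}$ partition $V$, so $S\cap T=\emptyset$. What you need is $R\cap S=\{1\}$ and $U\cap T=\{\pinf-1\}$, together with $R_i=T_i$ for $i\ge 1$.

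The missing computation is short. With exactly $\psup$ ones, a node becomes $0$ iff both of its excluded nodes are $1$.
\begin{itemize}
\item \emph{From Case~1.} Let the unique $0$ of $T$ sit at $T_j$. The image has $U$ all $1$, and in $R$ exactly $R_{j+1},R_{j+2}$ (indices mod $|R|=|T|$) are $1$. So the image is in Case~2 iff node $1=R_0$ is among them, i.e.\ iff $j\in\{|T|-2,|T|-1\}$. Otherwise it is in Case~3.
\item \emph{From Case~2.} If the two ones of $T$ are not adjacent, the image has $\psup+1$ ones. If they are adjacent at $T_a,T_{a+1}$, the image has $U$ all $0$ and exactly $R_{a+2}$ at $0$. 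It is in Case~1 iff $a=|T|-2$. Otherwise it is in Case~3, with node $1$ the only $1$ of $S$.
\end{itemize}
Now chain the two transitions:
\begin{itemize}
\item A Case~2 configuration reached from Case~1 has its ones at $\{n-1,\pinf-1\}$ or $\{\pinf-1,\pinf\}$. That is $a\in\{|T|-1,0\}$, so $a+2\not\equiv 0$ and its image is in Case~3.
\item A Case~1 configuration reached from Case~2 has its $0$ of $T$ at $T_0=\pinf-1$, i.e.\ $j=0\notin\{|T|-2,|T|-1\}$. So its image is in Case~3.
\end{itemize}
Hence no Case~1/Case~2 cycle exists. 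The two longest paths are Case~1 $\to$ Case~2 $\to$ Case~3 and Case~2 $\to$ Case~1 $\to$ Case~3. Each is followed by a configuration with at least $\psup+1$ ones and then by $1^n$, which gives exactly the bound of $4$ iterations.
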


\begin{proof}
    By Lemma~\ref{lemma:convergence_halft_plus_one} if $\nbx{T}{V} \ge \psup + 1$ then $A$ converges to the all 1s configuration. 
    Moreover, by Lemma~\ref{lemma:double_corolla_between}, $\nbx{T}{x} = i$ for an $i$ between $1$ and $\pinf - 2$ then $\nbAx{T}{V} \ge \psup + 1$ implying that $A$ converges in two iteration.
    Two cases remain.
    
    --- $\nbx{T}{V} = \psup$ and $\nbx{T}{S} = \pinf - 1$,  we have that $\nbAx{T}{R} = \psup$. 
    At this point, there are again two cases depending on the value of $R  \cap S$ in A(x) \ie the value of $\val{A(x)}{1}$.
    So, either $\val{A(x)}{1} = 1$ (\ie $0 < \nbx{T}{S} < \pinf - 1$) and from the previous reasoning $A$ converges to the all 1s configuration in $3$ iterations.
    Or $\val{A(x)}{1} = 0$, then $\nbAx{T}{T \cap R} = \psup$ and consequently $\val{A(x)}{T_0} = 0$ ($T_0 \in T$ but $T_0 \notin R$).
    Also, $\nbAit{T}{V}{2} = \psup$.
    From the definition of the graph, since the two only nodes $r_1,r_2$ of $R$ such that $N_G(r_1)  \cap T_0 = N_G(r_2) \cap T_0 = \emptyset$ are the two last elements of $R$, namely $R_{|R| - 1}$ and $R_{|R| - 2}$, we have that $\val{\iteration{A}{2}{x}}{R_{|R| - 1}} = \val{\iteration{A}{2}{x}}{R_{|R| - 2}} = 1$. 
    Thus $\nbAit{T}{T}{2} = 2$ and $\nbAit{T}{U}{2} = \pinf - 1$.
    Consequently, $\val{\iteration{A}{2}{x}}{R_{1}} = 0$, and we deduce that $0 < \nbAit{T}{S}{2} < \pinf - 1$.
    We conclude that $A$ converges to the all 1s configuration in $4$ iteration.

    --- $\nbx{T}{V} = \psup$ and $\nbx{T}{T} = \psup$, then $\nbAx{T}{U} = |U|$. Again, two cases are possible depending on $\val{A(x)}{1}$.
    If $\val{A(x)}{1} = 0$, then $\nbAx{T}{S} = \pinf - 2$ and therefore $A$ converges to the configuration with all 1s in $3$ iterations; but if $\val{A(x)}{1} = 1$, this entails that $\nbAx{T}{S} = |U|$. 
    Thus, $\nbAit{T}{V}{2} = \psup$ and $\nbAit{T}{U}{2} = 0$. 
    Furthermore, $\val{\iteration{A}{2}{x}}{1} = 1$; indeed, $\nbAx{T}{N_G(1) \cap (\{1\} \cup S)} = \pinf - 1$. it follows that $\nbAit{T}{S}{2} = 1$ (the only element $S$ which is not in $U$ is $1$).
    We conclude that $A$ converges to the configuration with all 1s in $4$ iterations.

    Therefore, $A$ converges within at most $4$ iterations. 
    Applying the same reasoning for $\majority{x} = 0$, the theorem follows.
\end{proof}

	\subsection{Class of complementary-circle-triangle MBAN}
\label{house_MBAN}
Again, in this section, we talk about a network that has been found experimentally, and it is hard to provide global intuitions on its dynamics.

Let $n \ge 7$ be an odd integer and $V = \{0, \ldots, n-1\}$. From $V$ we define two sequences, $U = (3, \ldots, n-1)$  and $R = \{0,1,2\}$.

Let $G = (V,E)$ be the graph such that in $E$ :
\begin{itemize}[nosep]
    \item each node $i$ of $U$ has all nodes of $V \backslash \{i,i-1\}$ for predecessors,
    \item the node $0$ has all nodes of $V \backslash\{2,n-1\}$ for predecessors,
    \item the nodes $1$ and $2$ have all nodes of $V \backslash \{0,1\}$ for predecessors.
\end{itemize}

See Figure~\ref{fig:house_network} for an example.
We refer to the MBAN with $G$ as the \emph{complementary-circle-triangle} MBAN of $n$ nodes, and to the corresponding class as the set of all complementary-circle-triangle MBANs.

The following algorithm can generate this kind of network for any size $n$: 

    \begin{enumerate}[nosep]
        \item Create $n$ nodes labeled from $0$ to $n-1$;
        \item Add all possible arcs; 
        \item Remove the arc $(i,i+1 \mod n)$ for each node $i$;
        \item Remove the arc $(i,i)$ for each node $i$ except for nodes $0$ and $2$;
        \item Remove the arcs $(0,2)$ and $(2,0)$.
    \end{enumerate}

\begin{figure*}
	\centering
	\begin{minipage}[b]{0.5\linewidth}
   \centering
   \begin{tikzpicture}
   \tikzstyle{automate} = [draw,circle,inner sep=0pt,minimum size=16pt]
	\tikzstyle{arc} = [-{>[length=1mm]}]
	
	\def\sommets{{3,2,0,6,5,4}} 
	
	\foreach \i in {0,...,5} {
		\pgfmathtruncatemacro{\nexti}{\sommets[\i]};
		\node[automate] (A\nexti) at ({60*\i}:2) {\pgfmathparse{\sommets[\i]}\pgfmathresult};
	}
	
	\node[automate] (A1) at (0,3) {1};
	
    \draw[arc] (A0) to[loop above] (A0);
    \draw[arc] (A0) to (A3);
    \draw[arc] (A0) to (A4);
    \draw[arc] (A0) to (A5);
    \draw[arc] (A0) to (A6);

    \draw[arc] (A1) to (A0);
    \draw[arc] (A1) to[bend left] (A3);
    \draw[arc] (A1) to (A4);
    \draw[arc] (A1) to (A5);
    \draw[arc] (A1) to[bend right] (A6);

	\draw[arc] (A2) to (A1);
    \draw[arc] (A2) to[loop above] (A2);
    \draw[arc] (A2) to (A4);
    \draw[arc] (A2) to (A5);
    \draw[arc] (A2) to (A6);

    \draw[arc] (A3) to (A0);
	\draw[arc] (A3) to[bend right] (A1);
    \draw[arc] (A3) to (A2);
    \draw[arc] (A3) to (A5);
    \draw[arc] (A3) to (A6);

    \draw[arc] (A4) to (A0);    
	\draw[arc] (A4) to (A1);
    \draw[arc] (A4) to (A2);
    \draw[arc] (A4) to (A3);
    \draw[arc] (A4) to (A6);

    \draw[arc] (A5) to (A0);
	\draw[arc] (A5) to (A1);
    \draw[arc] (A5) to (A2);
    \draw[arc] (A5) to (A3);
    \draw[arc] (A5) to (A4);

	\draw[arc] (A6) to[bend left] (A1);
    \draw[arc] (A6) to (A2);
    \draw[arc] (A6) to (A3);
    \draw[arc] (A6) to (A4);
    \draw[arc] (A6) to (A5);
\end{tikzpicture}
\end{minipage}\hfill
\begin{minipage}[b]{0.5\linewidth}
   \centering
\begin{tikzpicture}
	\tikzstyle{automate} = [draw,circle,inner sep=0pt,minimum size=16pt]
	\tikzstyle{arc} = [-{>[length=1mm]}]
	
	\def\sommets{{3,2,0,6,5,4}} 
	
	\foreach \i in {0,...,5} {
		\pgfmathtruncatemacro{\nexti}{\sommets[\i]};
		\node[automate] (A\nexti) at ({60*\i}:2) {\pgfmathparse{\sommets[\i]}\pgfmathresult};
	}
	
	\node[automate] (A1) at (0,3) {1};
	
	\foreach \i in {3,...,6} {
		\pgfmathtruncatemacro{\pred}{\i-1};
		\draw[arc] (A\pred) to[bend left, looseness=0.5] (A\i);
	}
	
	\draw[arc] (A3) to[loop right] (A3);
	\draw[arc] (A4) to[loop below] (A4);
	\draw[arc] (A5) to[loop below] (A5);
	\draw[arc] (A6) to[loop left] (A6);
	
	\draw[arc] (A6) to[bend left, looseness=0.5] (A0);
	\draw[arc] (A0) to[bend left] (A2);
	\draw[arc] (A0) to (A1);
	\draw[arc] (A1) to[loop above] (A1);
	\draw[arc] (A1) to (A2);
	\draw[arc] (A2) to[bend left] (A0);
\end{tikzpicture}
\end{minipage}
	\caption{On the left, the graph of the complementary-circle-triangle MBAN with $7$ nodes; and on the right, its complement.  
    According to the definition, $U = (3,4,5,6)$ and $R = (0,1,2)$. }\label{fig:house_network}
\end{figure*}
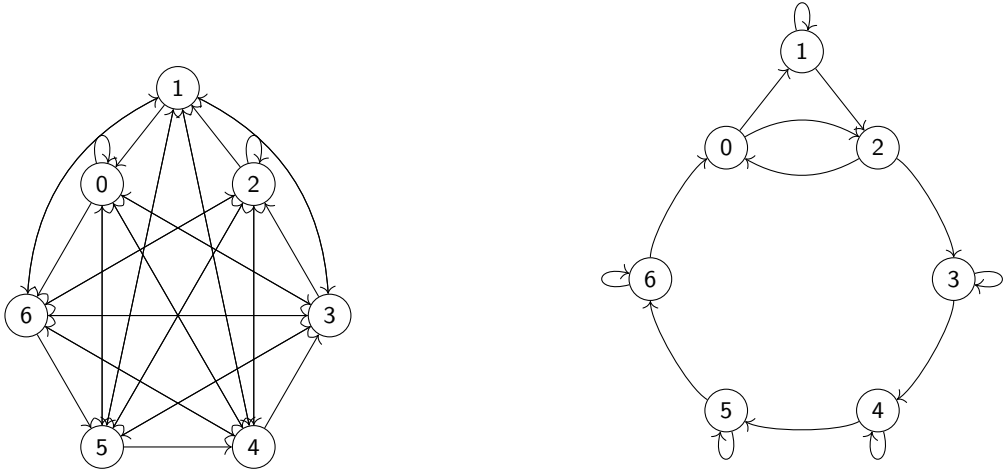

\ \\
Our goal in this section is to prove that any circle-triangle MBAN solves DCT. 
Let us then assume that $A$ is a circle-triangle MBAN and $x$ is a configuration of $A$ such that $\majority{x} = 1$. We will employ the same method as in the previous section.
We start by proving that $\nbAit{T}{V}{i} \le \nbAit{T}{V}{i+1}$ for any $i$. 
For this, we assume, without loss of generality, that $\nbx{T}{V} = \psup$. 
Indeed, if $\nbx{T}{V} \ge \psup + 1$ and, by Lemma~\ref{lemma:convergence_halft_plus_one}, $A$ converges to the fixed point of 1s. 

Let $S = (2,\ldots, n-1)$, and $\alpha$ be the subsequence of $S$ that contains all nodes $s \in S$ such that $\val{x}{s} = 1$. 
Remark that $\nbAx{F}{U}$ is the number of consecutive pairs of $S$ in $\alpha$. 
Indeed, since $\nbx{T}{V} = \psup$ and since $\left\lceil \cfrac{n-2}{2} \right\rceil = \pinf$, we need to remove two 1s so that the number of 0s becomes the majority. Thus, $\val{A(x)}{u} = 0$ if and only if $\val{x}{u-1} = \val{x}{u} = 1$, for all $u \in U$. 

\begin{lemma}\label{lemma:not_decresed_HN}
    If $\val{x}{0} = \val{x}{1}$, then $\nbx{T}{V} = \nbAx{T}{V}$. 
    Otherwise, $\nbAx{T}{V} \ge \psup + 1$. 
\end{lemma}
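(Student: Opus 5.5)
The plan is to work under the standing assumptions of this subsection, $\majority{x}=1$ and $\nbx{T}{V}=\psup$, and to compute $\nbAx{F}{V}$ exactly as a function of $x$. Every node of $G$ has in-degree exactly $n-2$, which is odd, so ties never occur. Each node misses exactly two nodes of $V$. With $\psup$ ones in $x$, a node receives a strict majority of 1s (at least $\pinf$ of them) unless both of its missing in-neighbours are 1. This is the same observation the paper already makes for $U$.

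\textbf{Step 1: node-by-node rules.} Reading off the definition of $G$:
\begin{itemize}
\item for $u\in U$, $\val{A(x)}{u}=0$ iff $\val{x}{u-1}=\val{x}{u}=1$;
\item $\val{A(x)}{0}=0$ iff $\val{x}{2}=\val{x}{n-1}=1$;
\item $\val{A(x)}{1}=\val{A(x)}{2}=0$ iff $\val{x}{0}=\val{x}{1}=1$.
\end{itemize}
Now view $S=(2,\ldots,n-1)$ as a cyclic sequence, so that $(n-1,2)$ also counts as a consecutive pair. Let $c$ be the number of cyclic consecutive pairs of $S$ contained in $\alpha$. This yields the key identity
\[
\nbAx{F}{V} \;=\; c + 2\cdot[\val{x}{0}=\val{x}{1}=1].
\]

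\textbf{Step 2: count $c$ through blocks.} Let $k=\nbx{T}{S}$, and let $b$ be the number of maximal cyclic blocks of 1s in $S$. Provided $S$ contains at least one 0 and at least one 1, we have $c=k-b$, and $1\le b\le \min(k,\,|S|-k)$. This is the same kind of counting of periodic consecutive pairs that underlies Lemma~\ref{lemma:peridoic_sub_seq}.

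\textbf{Step 3: the case $\val{x}{0}\neq\val{x}{1}$.} Here $k=\psup-1=\pinf$, and $S$ contains $(n-3)/2\ge 2$ zeros, so $b\ge 1$. Hence $\nbAx{F}{V}=c\le \pinf-1$, and therefore $\nbAx{T}{V}\ge \psup+1$. This gives the second assertion directly.

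\textbf{Step 4: the case $\val{x}{0}=\val{x}{1}$.} The two subcases are handled by the same formula.
\begin{itemize}
\item If $\val{x}{0}=\val{x}{1}=1$, then $k=(n-3)/2$ and $\nbAx{F}{V}=k-b+2$.
\item If $\val{x}{0}=\val{x}{1}=0$, then $k=\psup$ and $\nbAx{F}{V}=k-b$.
\end{itemize}
In both subcases, $b=1$, meaning the 1s of $S$ form one cyclic block, gives $\nbAx{F}{V}=\pinf$, that is, $\nbAx{T}{V}=\psup=\nbx{T}{V}$. This is the equality claimed.

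\textbf{Main obstacle.} I expect the hard part to be pinning down $b$ in this first case, since Step 4 shows the equality is exactly the statement $b=1$. I would first try to show that the 1s in $S$ must form a single cyclic block when $\val{x}{0}=\val{x}{1}$. If that cannot be forced from $x$ alone, I would fall back on the bound $b\ge1$. That bound gives $\nbAx{T}{V}\ge\nbx{T}{V}$ in every case, with equality attained precisely in the single-block configurations. When $b\ge2$, the count instead reaches at least $\psup+1$, which is the situation of the second assertion and only strengthens the monotonicity $\nbAit{T}{V}{i}\le\nbAit{T}{V}{i+1}$ that this lemma is meant to feed.
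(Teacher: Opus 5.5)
Your node-by-node rules and the identity $F(V,A(x)) = c + 2\,[x_0=x_1=1]$ are correct. The ``main obstacle'' you flag, however, cannot be overcome: $b=1$ is not forced, so the first assertion, read as an equality, is false.

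Take $n=7$ and $x=(1,1,1,0,1,0,0)$, so that $T(V,x)=4=\lceil n/2\rceil$ and $x_0=x_1=1$. Nodes $1$ and $2$ miss $\{0,1\}$, which are both 1, so they become 0. Nodes $0,3,4,5,6$ miss $\{2,6\},\{2,3\},\{3,4\},\{4,5\},\{5,6\}$ respectively, and none of these pairs is all-1, so they become 1. Hence $A(x)=(1,0,0,1,1,1,1)$ and $T(V,A(x))=5\neq T(V,x)$. This is exactly what your formula gives with $b=2$: the 1s of the cycle $(2,\dots,6)$ form the two isolated blocks $\{2\}$ and $\{4\}$. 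For $n\ge 9$ there are analogous examples with $x_0=x_1=0$.

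So drop the attempt to force a single block and commit to your fallback, which is a complete proof of the correct statement. Your identity gives $T(V,A(x))=\lfloor n/2\rfloor+b$ when $x_0=x_1$, and $T(V,A(x))=\lceil n/2\rceil+b$ when $x_0\neq x_1$, with $b\ge 1$. Hence $T(V,A(x))\ge T(V,x)$ in the first case, with equality iff the 1s form one cyclic block, and $T(V,A(x))\ge\lceil n/2\rceil+1$ in the second.

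In Step 4, say explicitly that $S$ contains both a 0 and a 1, so that $c=k-b$ and $b\ge 1$ apply. It has $(n-5)/2\ge 1$ zeros when $x_0=x_1=0$, and $(n-3)/2\ge 2$ ones when $x_0=x_1=1$.

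This inequality is also all the paper's own proof delivers, since both of its claims are stated with $\ge$. It is also all the subsequent theorem uses: that proof disposes of the subcase $T(V,A(x))\ge\lceil n/2\rceil+1$ separately before assuming $T(V,A(x))=\lceil n/2\rceil$. The ``$=$'' in the statement should be read as ``$\ge$''.

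Compared with the paper's argument, yours is cleaner and sharper. The paper splits on $x_0$, then on whether the 1s of the linear sequence $(2,\dots,n-1)$ form a consecutive block. It handles node $0$ by a separate argument about whether $2$ and $n-1$ can both be 1, and it only ever obtains bounds. You observe that node $0$'s missing pair $\{2,n-1\}$ is precisely the wrap-around pair. This lets you close $S$ into a cycle and count the zeros of $A(x)$ exactly via blocks. That proves the lemma in its corrected form and makes the equality case, and hence the failure of the stated equality, transparent.
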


\begin{proof}
    We need to distinguish two cases, according to $\val{x}{0}$, which are addressed in the two following claims.

    \begin{claim}
    	Case 1: $\val{x}{0} = 0$. 
    	If $\val{x}{1} = \val{x}{0}$, then $\nbAx{T}{V} \ge \nbx{T}{V}$; otherwise, $\nbAx{T}{V} \ge \nbx{T}{V} + 1$. 
    \end{claim}

    \begin{claimproof}
        Let us proceed in two steps, initially assuming that $\alpha$ is not a consecutive subsequence of $S$. 
        Therefore, $\nbAx{F}{U} < \psup - \nbx{T}{R} - (1 - \val{x}{2})$.
        Note that $1 - \val{x}{2}$ is used to distinguish whether $\val{x}{2} = 1$ or $\val{x}{2} = 0$, and $\nbx{T}{R} = \val{x}{1} + \val{x}{2}$. 
        Then, $\psup - \nbx{T}{R} - (1 - \val{x}{2}) = \pinf - \val{x}{1}$. 
        Due to $\val{x}{0} = 0$, it follows that $\val{A(x)}{1} = \val{A(x)}{2} = 1$. 
        Consequently, $\nbAx{F}{V} < (1 - \val{A(x)}{0}) + \pinf - \val{x}{1}$.
        If $\val{A(x)}{0} = 1$ then, directly, $\nbAx{F}{V} < \pinf - \val{x}{1} \le \pinf$.
        And, if $\val{A(x)}{0} = 0$, then $\nbAx{F}{V} <  \psup - \val{x}{1} \le \psup$.
        Therefore $\nbAx{F}{V} < \psup$ or, alternatively, $\nbAx{T}{V} \ge \psup + \val{x}{1}$.
        
        As the second step, let us now consider that $\alpha$ is a consecutive subsequence of $S$ then, $\nbAx{F}{U} = \pinf - \val{x}{1}$. 
        Thus, since the smallest path between nodes $2$ and $n-1$ is larger than $\psup$, it follows that $2$ and $n-1$ cannot have the value 1 in $x$ at the same time. 
        Consequently, $\val{A(x)}{0} = 1$, which implies that $\nbAx{F}{V} = \pinf - \val{x}{1} \le \pinf$.
    \end{claimproof}

    \begin{claim}
	   Case 2: $\val{x}{0} = 1$.
	   If $\val{x}{1} = \val{x}{0}$ then $\nbAx{T}{V} \ge \psup$; otherwise, $\nbAx{T}{V} \ge \psup + 1$. 
    \end{claim}

    \begin{claimproof}
        Once again, let us proceed in two steps, first assuming that $\alpha$ is not a consecutive subsequence of $S$. 
        Then, $\nbAx{F}{U} < \psup - \nbx{T}{R} - (1 - \val{x}{2})$.
        Since $\nbx{T}{R} = \val{x}{0} + \val{x}{1} + \val{x}{2} = 1 + \val{x}{1} + \val{x}{2}$, it follows that  $\psup - \nbx{T}{R} - (1 - \val{x}{2}) = \pinf - \val{x}{1} - 1$. 
        Because $\val{x}{0} = 1$, this entails that $\val{A(x)}{1} = \val{A(x)}{2} = (1 - \val{x}{1})$. 
        Then, $\nbAx{F}{V} < (1 - \val{A(x)}{0}) + 2 \val{x}{1} + \pinf - \val{x}{1} - 1 = \pinf - \val{A(x)}{0} + \val{x}{1}$. 
        So, either $\val{A(x)}{0} = 1$ and $\nbAx{F}{V} < \pinf$, or $\val{A(x)}{0} = 0$, which implies that $\val{x}{1} = 1$ and $\nbAx{F}{V} \le \pinf$. 
        
        Now, considering that $\alpha$ is a consecutive subsequence of $S$, then $\nbAx{F}{U} = \pinf - \val{A(x)}{0} + \val{x}{1}$ and nodes $n-1$ and $2$ cannot take on vale 1 in $A(x)$. 
        Hence, $\val{A(x)}{0} = 1$, and it follows that either $\val{x}{1} = 0$ and  $\nbAx{F}{V} < \pinf$, or $\val{x}{1} = 1$ and  $\nbAx{F}{V} = \pinf$.
    \end{claimproof}
\end{proof}

It remains to show that $A$ converges.

\begin{theorem}
    Each complementary-circle-triangle MBAN is able to solve DCT,
    and it converges in $5$ iterations.
\end{theorem}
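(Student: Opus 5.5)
We argue as for the complementary-left-right class: first monotonicity, then a bound on how long the number of 1s can stay stuck at $\psup$. Fix $x$ with $\majority{x}=1$; the case $\majority{x}=0$ is symmetric, since complementing all states commutes with the local majority rule.

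If $\nbx{T}{V}\ge\psup+1$, Lemma~\ref{lemma:convergence_halft_plus_one} applies, because every node has in-degree $n-2$. It gives $A(x)=1^n$ in one step. So assume $\nbx{T}{V}=\psup$.

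By Lemma~\ref{lemma:not_decresed_HN}, either $\val{x}{0}\neq\val{x}{1}$, and then $\nbAx{T}{V}\ge\psup+1$, so $\iteration{A}{2}{x}=1^n$. Or $\val{x}{0}=\val{x}{1}$, and the count of 1s stays exactly $\psup$. The count therefore never drops below $\psup$, and the only thing to rule out is a long run of steps in which $\val{\iteration{A}{t}{x}}{0}=\val{\iteration{A}{t}{x}}{1}$ with exactly $\psup$ ones. The whole proof reduces to showing that any such run lasts at most three steps. After that, Lemma~\ref{lemma:not_decresed_HN} gives $\ge\psup+1$ ones at the next step, and one more step gives $1^n$, for $5$ iterations in total.

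To control the run, I would track the pair $(\val{y}{0},\val{y}{1})$ for $y=\iteration{A}{t}{x}$ using the explicit neighbourhoods. Node $1$ and node $2$ both see $V\setminus\{0,1\}$, which contains $\psup-\val{y}{0}-\val{y}{1}$ ones out of $n-2$. If $\val{y}{0}=\val{y}{1}=0$, they see $\psup>(n-2)/2$ ones, so $\val{A(y)}{1}=\val{A(y)}{2}=1$. If $\val{y}{0}=\val{y}{1}=1$, they see $\psup-2<(n-2)/2$ ones, so $\val{A(y)}{1}=\val{A(y)}{2}=0$. Node $0$ sees $V\setminus\{2,n-1\}$, so $\val{A(y)}{0}=1$ unless $\val{y}{2}=\val{y}{n-1}=1$, in which case it is $0$. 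The step becomes a deviation (a bad step for the adversary) exactly when node $0$ and node $1$ disagree in $A(y)$.

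So I would split into two cases.

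\emph{Case $\val{y}{0}=\val{y}{1}=0$.} Here node $1$ goes to $1$. Staying stuck forces node $0$ to $1$ as well, which needs $\val{y}{2}=0$ or $\val{y}{n-1}=0$. We then land in the case $(1,1)$.

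\emph{Case $\val{y}{0}=\val{y}{1}=1$.} Here node $1$ goes to $0$. Staying stuck forces $\val{y}{2}=\val{y}{n-1}=1$, so that node $0$ also goes to $0$. We then land in the case $(0,0)$, with $\val{A(y)}{2}=0$.

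The run thus alternates between $(0,0)$ and $(1,1)$. To bound it, I would use the remark preceding Lemma~\ref{lemma:not_decresed_HN}: the 0s of $A(y)$ on $U$ sit exactly at the nodes $u$ with $\val{y}{u-1}=\val{y}{u}=1$, that is, at the consecutive pairs of 1s along $S=(2,\ldots,n-1)$. This gives $\val{A(y)}{n-1}=0$ iff $\val{y}{n-2}=\val{y}{n-1}=1$. Combining this with the forced values of node $2$ (which equals node $1$ after each step) should show the following. A $(1,1)$ state reached from $(0,0)$ has $\val{\cdot}{2}=1$. The next $(0,0)$ state has $\val{\cdot}{2}=0$. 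But reaching $(1,1)$ again from it and then staying stuck would require $\val{\cdot}{2}=\val{\cdot}{n-1}=1$ at a time when node $2$ has just been $0$.

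\textbf{Main obstacle.} The hard step is pinning down $\val{\cdot}{n-1}$ along the run, since it depends on the density pattern of the 1s on the circle $S$. I would try to show that two consecutive $(1,1)$-stuck steps are incompatible with the fixed total $\psup$: the number of consecutive 1-pairs on $S$ must then equal precisely the number $\pinf-\val{y}{1}$ computed in the claims of Lemma~\ref{lemma:not_decresed_HN}. That equality forces the 1s on $S$ to form one consecutive block. Then, as in that proof, $2$ and $n-1$ cannot both be $1$, which contradicts the stuck condition. This would bound the run by three steps, and hence give convergence within $5$ iterations.
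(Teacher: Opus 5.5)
Your setup is correct up to the alternation. With exactly $\psup$ ones, nodes $1$ and $2$ flip together, node $0$ becomes $0$ iff $\val{y}{2}=\val{y}{n-1}=1$, and a stuck run alternates between $(0,0)$ and $(1,1)$. The step that fails is the claim that such a run lasts at most three steps. It cannot be repaired, because the $5$-iteration bound in the statement is itself false. For $n=7$ and $x=(0,0,1,1,1,1,0)$:
$A(x)=(1,1,1,0,0,0,1)$, $A^2(x)=(0,0,0,1,1,1,1)$, $A^3(x)=(1,1,1,1,0,0,0)$, $A^4(x)=(1,0,0,0,1,1,1)$, $A^5(x)=(1,1,1,1,1,0,0)$, and only $A^6(x)=1^7$.

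Both of your shortcuts break.
\begin{itemize}
\item The ``node $2$ has just been $0$'' contradiction mixes time steps. A $(1,1)$ state $z=A(w)$ coming from a $(0,0)$ state $w$ has $\val{z}{2}=1$ automatically, since node $2$ sees $\psup$ ones, whatever $\val{w}{2}$ was. So the only condition for staying stuck is $\val{z}{n-1}=1$.
\item The fact borrowed from the proof of Lemma~\ref{lemma:not_decresed_HN}, that $2$ and $n-1$ cannot both be $1$, holds only for $(0,0)$ states, which carry $\psup$ ones on $S=(2,\ldots,n-1)$. A $(1,1)$ state carries only $\pinf-1$ ones there. Equality of counts forces them into one block of the \emph{cycle} $(2,3,\ldots,n-1,2)$, where node $0$ accounts for the wrap-around pair $(n-1,2)$. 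Staying stuck requires exactly that this block wraps through $n-1$ and $2$, which is perfectly consistent.
\end{itemize}

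What does hold is a drift, not a bounded run. In a $(0,0)$ state whose successor is still stuck, the ones on $S$ form an interval $\{s,\ldots,s+\pinf\}$ with $2\le s\le\pinf$. If the run continues two more steps, the interval becomes $\{s+1,\ldots,s+1+\pinf\}$. Once $s=\pinf$, the next $(1,1)$ state has node $n-1$ at $0$, so nodes $0$ and $1$ split at the following step. Together with monotonicity of the count, this proves the DCT part with a bound of $n-1$ iterations. The bound is attained by the configuration whose ones are exactly $2,\ldots,\psup+1$, which for $n=7$ is the example above.

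The paper's own proof has the same defect in two places.
\begin{itemize}
\item In its subcase $\val{x}{0}=\val{x}{1}=\val{x}{2}=0$, the strict inequality $\nbAx{F}{U}<\psup-1$ fails when the ones form an interval, e.g.\ $(0,0,0,1,1,1,1)\mapsto(1,1,1,1,0,0,0)$.
\item Its subcase $\val{x}{2}=1$, $\val{x}{n-1}=0$ asserts ``at most $5$ iterations'' without argument, and the counterexample above lies exactly there.
\end{itemize}
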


\begin{proof}
    As already explained, if $\nbx{T}{V} \ge \psup + 1$ then, by Lemma~\ref{lemma:convergence_halft_plus_one}, $A$ converges in 1 iteration. 
    Therefore, we assume that $\nbx{T}{V} = \psup$. 
    Moreover, by Lemma~\ref{lemma:not_decresed_HN}, if $\val{x}{0} \neq \val{x}{1}$ then $\nbAx{T}{V} \ge \psup + 1$ and $A$ converges in $2$ iterations.
    Thus, we assume also that $\val{x}{0} = \val{x}{1}$. 
    We distinguish again two cases, according to the value of $\val{x}{0}$. 

	First, let us consider $\val{x}{0} = 0$.
    Since if $\nbAx{T}{V} \ge \psup +1$, then $A$ converges in $2$, and we assume also that $\nbAx{T}{V} = \psup$.
    Three sub-cases are possible:
    
    --- $\val{x}{2} = 0$: Consequently $\val{A(x)}{0} = 1$ and therefore $\nbAx{T}{R} = 3$.
    Thus, $\nbAx{F}{V} = \nbAx{F}{U} < \psup - 1 = \pinf$, and then $\nbAx{T}{V} > \psup$ and $A$ converges in $2$ iterations.
    
    --- $\val{x}{2} = 1 = \val{x}{n-1} = 1$: Then, $\val{A(x)}{0} = 0$ and $\val{A(x)}{1} = 1$, and it follows that $A$ converges in $3$ iterations.
    
    --- $\val{x}{2} = 1$ and $\val{x}{n-1} = 0$:
    Therefore, $\val{A(x)}{n-1} = 1$ and $\nbAx{T}{R} = 3$.
    Thus, either $\nbAit{T}{V}{2} \ge \psup +1$ and $A$ converges in $3$ iterations, or $\nbAit{T}{R}{2} = 0$ and $A$ converges in at most $5$ iterations.

    Now, let us assume that $\val{x}{0} = 1$. 
    Since if $\nbAx{T}{V} \ge \psup +1$, then $A$ converges in $2$ iterations, we assume also that $\nbAx{T}{V} = \psup$, and two further 
possibilities arise:

    --- $\val{x}{2} = 0$ or $\val{x}{n-1} = 0$: Then, $\val{A(x)}{0} = 1$ and $\val{A(x)}{1} = 0$ and, consequently, $A$ converges in at most $3$ iterations.
    
    --- $\val{x}{2} = 1 = \val{x}{n-1}$: Thus, $\nbAx{T}{R} = 0$ and $A$ converges in at most $3$ iterations. 
    
    By applying the analogous reasoning above to $\majority{x} = 0$ we conclude the proof.
\end{proof}

	\subsection{Class of MBANs with two intersecting cycles}

In the previous subsections, we showed two classes of non-omniscient MBANs able to solve DCT, and whose convergence is fast. 
However, their number of edges is large: $n^2 - 2n$. 
Here, we focus our attention on a class of non-omniscient MBANs able to solve DCT, with a smaller number of edges, but with a slower convergence speed (linear in $n$). 
The idea of these networks is to modify the complete circle network by transforming the root of the path into a set of nodes with an in-degree equal to $n-2$.  
More precisely, we define two sets of nodes, namely $U$ and $R$, such that $U \cap R \neq \emptyset$. 
The nodes of $R$ play the role of the root path of the complete cycle graph. 
In other words, after some iteration, all the nodes of $R$ have as value the majority of the original configuration. 
Then, thanks to the intersecting point, this value is propagated along to the nodes of $U$. 

Let $n \ge 7$ be an odd integer and $V = \{0, \ldots, n-1\}$. 
Let $U = \{1,\ldots, \pinf\}$, $R = \{\psup, \ldots, n-2\}$, and $c$ an element of $R$ (the \emph{cross point}). 
We consider $G = (V,E)$ the graph such that in $E$  
\begin{itemize}[nosep]
	\item $(i-1, i) \in E$ for all $i \in U$,
	\item $(i - \pinf, j) \in E$ for all $i,j \in R$ with $i \neq j$,
	\item $(n-1, i) \in E$ for all $i \in R$,
	\item $(i, n-1) \in E$ for all $i \in V \backslash \{0,c\}$, and
	\item $(c,0) \in E$.
\end{itemize}

\ \\ 
Intuitively if $c \neq n-2$, the sequences $S = (1,\ldots,\pinf,c,0)$ and $T = (\psup,\ldots,n-1)$ are two cycles, and any $T_i$ $\in$ $T$, different to $n-1$, has all nodes of $G$ as predecessors, except $S_i$ and $0$, and node $n-1$ has for predecessors all nodes of $G$ except $c$ and $0$. See Figure~\ref{fig:cross_network_7_nodes} for an example.
The following algorithm allows the construction of the kind of network for any size $n$ and cross point $c$: 

    \begin{enumerate}[nosep]
        \item Create $n$ nodes labeled from $0$ to $n-1$;
        \item Add the arc $(i,i+1)$ for each $i$ from $0$ to $\pinf-1$;
        \item Add the arc $(i,j)$ for all nodes $i$ between $1$ and $n-1$ and all nodes $j$ between $\psup$ and $n-1$;
        \item Remove the arc $(i - \pinf,i)$ for each $i$ between $\psup$ and $n-2$;
        \item Remove the arc $(c,n-1)$;
        \item Add the arc $(c,0)$.
    \end{enumerate}

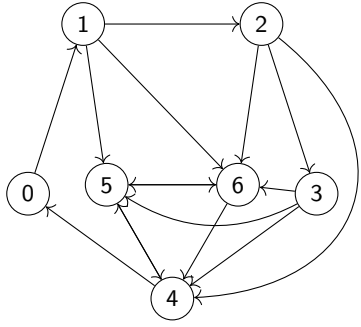
\begin{figure}
	\centering
	\begin{tikzpicture}
	\tikzstyle{automate} = [draw,circle,inner sep=0pt,minimum size=16pt]
	\tikzstyle{arc} = [-{>[length=1mm]}]
	
	\def\sommets{{3,2,1,0,4}} 
	
	\foreach \i in {0,...,4} {
		\node[automate] (A\i) at ({72*\i-18}:2) {\pgfmathparse{\sommets[\i]}\pgfmathresult};
	}
	
        \begin{scope}[shift={(0,-1)}]
                \node[automate] (A5) at (150:1) {5}; 
	        \node[automate] (A6) at (30:1) {6};  
        \end{scope}
	
	\foreach \i in {0,...,4} {
		\pgfmathtruncatemacro{\next}{mod(\i+1,5)}
		\draw[arc] (A\next) to (A\i);
	}
	
        \draw[arc] (A1) .. controls (0:3.2) and (-36:3.2) .. (A4);
	\draw[arc] (A2) to (A5);
	\draw[arc] (A2) to (A6);
	\draw[arc] (A1) to (A6);
	\draw[arc] (A0) to (A6);
	\draw[arc] (A0) to[bend left] (A5);
	\draw[arc] (A4) to (A5);
	\draw[arc] (A5) to (A4);
	\draw[arc] (A5) to (A6);
	\draw[arc] (A6) to (A5);
	\draw[arc] (A6) to (A4);
\end{tikzpicture}
	\caption{A graph of an MBAN with two intersecting cycles with 7 nodes and node $c = 4$.
    According to the definition, the cycles $S = (1,2,3,4,0)$ and $T=(4,5,6)$ have node $4$ as intersecting point. Remark that all nodes of $S$, except node $4$, have $1$ predecessor, and all nodes of $T$ have $n-2$ predecessors. 
    For examples, $1$ has for in-neighbor $1-1 = 0$, the node $5$ has for in-neighbors all the nodes except $0$ and $5-3 = 2$ and the nodes $6$ has for in-neighbors all the the nodes of the graph except $0$ and $4$.  
    }
    \label{fig:cross_network_7_nodes}
\end{figure}

\ \\
Let $A$ be an MBAN with such a graph. 
Then, we say $A$ is an MBAN with \emph{two intersecting cycles} of $n$ nodes, and define the corresponding class of all MBANs with two intersecting cycles.

Now, if $A$ is an MBAN with two intersecting cycles and $x$ is a configuration of $A$ such that $\majority{x} = 1$, for proving that $A$ is able to solve DCT we begin by showing that $\nbx{T}{V} \le \nbAx{T}{V}$ and, once again, two cases are possible: 
$\nbx{T}{V} \ge \psup + 1$, or $\nbx{T}{V} = \psup$. 

\begin{lemma}\label{lemma:cross_half_1_conv}
	If $\nbx{T}{V} \ge \psup + 1$, then $\nbAx{T}{V} \ge \nbx{T}{V}$ and $\nbAit{T}{V}{i} = n$ for all $i \ge \psup$. 
\end{lemma}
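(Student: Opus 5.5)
The plan is to separate $V$ into the \emph{dense} part $D = R\cup\{n-1\}$ and the \emph{path} part $P=\{0\}\cup U=\{0,1,\ldots,\pinf\}$. Every node of $D$ has in-degree exactly $n-2$: a node $j\in R$ has in-neighbours $\{1,\ldots,n-1\}\setminus\{j-\pinf\}$, and $n-1$ has in-neighbours $V\setminus\{0,c\}$. The nodes of $P$ have in-degree one, so they simply copy a value: $\val{A(x)}{0}=\val{x}{c}$ and $\val{A(x)}{i}=\val{x}{i-1}$ for $i\in U$.

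\textbf{Step 1 (the dense part).} Reuse the argument of Lemma~\ref{lemma:convergence_halft_plus_one}, restricted to $D$. If $\nbx{T}{V}\ge\psup+1$, then $\nbx{F}{V}\le\pinf-1=(n-3)/2<(n-2)/2$. Hence every node of $D$ sees a strict majority of $1$s, so $\val{A(x)}{d}=1$ for all $d\in D$, and $\nbAx{T}{D}=|D|=\pinf$.

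\textbf{Step 2 (monotonicity).} Compare $\nbAx{T}{P}$ with $\nbx{T}{P}$. The path part of $A(x)$ is $(\val{x}{c},\val{x}{0},\ldots,\val{x}{\pinf-1})$, so the only value lost is $\val{x}{\pinf}$ and the only value gained is $\val{x}{c}$. Split on $\val{x}{c}$:
\begin{itemize}
\item If $\val{x}{c}=1$, the gain compensates the loss, so $\nbAx{T}{P}\ge\nbx{T}{P}$. Together with $\nbAx{T}{D}=\pinf\ge\nbx{T}{D}$ this gives the claim.
\item If $\val{x}{c}=0$, then $\nbx{T}{D}\le\pinf-1$, because $c\in R\subseteq D$. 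The dense part therefore gains at least one $1$, which pays for the possible loss of $\val{x}{\pinf}$.
\end{itemize}
In both cases $\nbAx{T}{V}\ge\nbx{T}{V}\ge\psup+1$. This also lets us iterate: the hypothesis is preserved, so $D$ is all $1$s in $\iteration{A}{t}{x}$ for every $t\ge1$.

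\textbf{Step 3 (propagation along the path).} Since $c\in D$, node $0$ is $1$ from time $2$ on. Node $i\in U$ copies node $0$ with delay $i$, so node $i$ is $1$ at time $t$ as soon as $t\ge i+2$. More precisely, node $i$ at time $t$ equals $c$ at time $t-i-1$. To get every node equal to $1$ at time $\psup=\pinf+1$, I need node $\pinf$ at time $\pinf+1$. That value is node $c$ at time $0$, namely $\val{x}{c}$.

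\textbf{Main obstacle.} I expect the last node of the path to be the hard point. If $\val{x}{c}=1$, the timing gives $\nbAit{T}{V}{i}=n$ for all $i\ge\psup$ directly. If $\val{x}{c}=0$, I would first try to extract from the counting in Step~2 some extra information forcing the relevant copied value to be $1$. If that fails, the propagation argument only yields the threshold $i\ge\psup+1$. In that case the stated bound would have to be revisited, for example by checking whether the intended indexing of $U$ or of the cross point $c$ shortens the path by one node.
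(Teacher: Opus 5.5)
Your Steps 1 and 2 are correct and match the paper's argument for $T(V,A(x))\ge T(V,x)$. The dense part $R\cup\{n-1\}$ saturates to $\lfloor n/2\rfloor$ ones, and the path part loses at most $x_{\lfloor n/2\rfloor}$ while gaining $x_c$. The paper splits on whether $R\cup\{n-1\}$ is already all ones in $x$ rather than on $x_c$, but the bookkeeping is the same. Your split on $x_c$ is the sharper choice, since $x_c$ is exactly what controls the timing.

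The obstacle you isolate in Step 3 cannot be closed: the second claim of the statement is false at $i=\lceil n/2\rceil$. Your computation is right: node $\lfloor n/2\rfloor$ at time $\lceil n/2\rceil$ holds $x_c$, and nothing in the hypothesis forces $x_c=1$. So there is no extra information to extract from Step 2.

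Take $x$ equal to $1$ everywhere except $x_c=0$. Then $T(V,x)=n-1\ge\lceil n/2\rceil+1$, yet $A^{\lceil n/2\rceil}(x)_{\lfloor n/2\rfloor}=x_c=0$. Concretely, for $n=7$, $c=4$ and $x=(1,1,1,1,0,1,1)$:
\begin{itemize}
\item $A(x)=(0,1,1,1,1,1,1)$,
\item $A^2(x)=(1,0,1,1,1,1,1)$,
\item $A^3(x)=(1,1,0,1,1,1,1)$,
\item $A^4(x)=(1,1,1,0,1,1,1)$,
\end{itemize}
and only $A^5(x)=1^7$. The failure involves only the copy chain, so it does not depend on how the dense in-neighbourhoods are read.

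Re-indexing does not rescue the bound either. With $U=\{1,\dots,\lfloor n/2\rfloor\}$ and $c\in R$, the chain $c\to 0\to 1\to\cdots\to\lfloor n/2\rfloor$ has exactly $\lceil n/2\rceil$ arcs, consistent with the paper's $n=7$ figure. What your Step 3 actually proves is the correct, sharp statement:
\begin{itemize}
\item $T(V,A^i(x))=n$ for all $i\ge\lceil n/2\rceil+1$ in general;
\item $T(V,A^i(x))=n$ for all $i\ge\lceil n/2\rceil$ when $x_c=1$.
\end{itemize}

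The paper's proof reaches $\lceil n/2\rceil$ through an off-by-one slip. In its first case, where $R\cup\{n-1\}$ is all ones in $x$, it asserts all ones for every $j\ge\lfloor n/2\rfloor$. That already fails when $x_0=0$, because node $\lfloor n/2\rfloor$ holds $x_0$ at time $\lfloor n/2\rfloor$; for example, $n=7$ and $x=(0,1,1,1,1,1,1)$ gives $A^3(x)=(1,1,1,0,1,1,1)$. Its second case then applies this bound to $A(x)$ and obtains $\lfloor n/2\rfloor+1=\lceil n/2\rceil$ instead of the correct $\lceil n/2\rceil+1$.

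You should therefore state the lemma with threshold $\lceil n/2\rceil+1$ and finish with your Step 3 as written. Any iteration count derived from this lemma later in the paper picks up one extra step.
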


\begin{proof}
	We assume that $\nbx{T}{V} \ge \psup + 1$. 
	Then, $\nbAx{T}{R \cup \{n-1\}} = |R| + 1 = \pinf$. 
	Now, two cases are possible.
    
	--- $\nbx{T}{R \cup \{n-1\}} = \pinf$:\\ 
	Therefore, $\nbAx{T}{U} =  \nbx{T}{\{0\} \cup U} \ge \nbx{T}{\{0, \ldots, \pinf\}} - 1$. 
	However, since $\val{A(x)}{0} = 1$, it follows that $\nbAx{T}{\{0\} \cup U} = \nbAx{T}{U} + 1 \ge \nbx{T}{\{0\} \cup U}$; hence, $\nbAx{T}{V} \ge \nbx{T}{V}$. 
	Moreover, since $\nbAx{T}{R \cup \{n-1\}} = \pinf$, we remain in this case.
	Consequently, since $\val{x}{i} = \val{x}{i-1}$ for all $i \in U$, it turns out that $\nbAit{T}{\{0\} \cup U}{j} = \psup$ and $\nbAit{T}{V}{j} = n$ for all $j \ge \pinf$.
    
	--- $\nbx{T}{R \cup \{n-1\}} < \pinf$: \\
	Then, $\nbAx{T}{R \cup \{n-1\}} \ge \nbx{T}{R \cup \{n-1\}} + 1$. 
	Now, since $\nbAx{T}{\{0\} \cup U} \ge \nbx{T}{\{0\} \cup U} - 1$, it follows that $\nbAx{T}{V} \ge \nbx{T}{R \cup \{n-1\}} + 1 + \nbx{T}{\{0\} \cup U} - 1 \ge \nbx{T}{V}$.
	And, from the first case, $\nbAit{T}{V}{j} = n$ for all $j \ge \pinf + 1$
\end{proof}

For the latter case, we assume that $\nbx{T}{x} = \psup$, and remark that the nodes $0$ and $c$ have specific features. So much so that, in the next two lemmas, we need to consider two possibilities, according to $\val{x}{0}$, initially with value 0 and then with value 1.

\begin{lemma}\label{lemma:TCL_0} 
    Let $S = \{\pinf, \ldots, n-1\} \backslash \{c\}$.
    Let us assume that $\val{x}{0} = 0$ and $\nbx{T}{V} = \psup$.
	If $\nbx{T}{S} = |S| = \pinf$, then $\nbAx{T}{V} = \psup$; 
	otherwise, $\nbAx{T}{V} \ge \psup + 1$.
    In addition, in all cases, $\val{A(x)}{c} = 1$.
\end{lemma}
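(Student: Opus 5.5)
The plan is to compute $A(x)$ directly. The key point is that the hypothesis $\val{x}{0}=0$ makes every high in-degree node switch to $1$ in one step, so only the low in-degree nodes need to be tracked.

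I first read off the in-neighbourhoods from the construction, following the description given after the definition. Node $0$ has the single in-neighbour $c$. Each node $i\in U$ has the single in-neighbour $i-1$. Each node $r\in R$ has in-neighbourhood $V\setminus\{0,\,r-\pinf\}$. Node $n-1$ has in-neighbourhood $V\setminus\{0,c\}$. Hence every node of $R\cup\{n-1\}$ has odd in-degree $n-2$ and misses exactly node $0$ and one other node $s_r$, where $s_r\neq 0$. Confirming this reading against the edge list, in particular that the missing node of each $r\in R$ lies in $U$ and differs from $0$, is the step I expect to be the main obstacle. The edge list in the definition is somewhat ambiguous, so I would check it against Figure~\ref{fig:cross_network_7_nodes}.

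Next, take any $r\in R\cup\{n-1\}$. Since $\val{x}{0}=0$, we get $\nbx{T}{N_G(r)} = \psup - \val{x}{s_r} \ge \psup - 1$. Because $n-2$ is odd, the strict majority threshold is $\lceil (n-2)/2\rceil = \psup - 1$. So $\val{A(x)}{r}=1$ for every $r\in R\cup\{n-1\}$. In particular $\val{A(x)}{c}=1$, since $c\in R$; this gives the last assertion of the lemma in every case. It also shows $\nbAx{T}{R\cup\{n-1\}} = |R|+1 = \pinf$.

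For the remaining nodes, the rule is a shift. We have $\val{A(x)}{i}=\val{x}{i-1}$ for $i\in U$ and $\val{A(x)}{0}=\val{x}{c}$. Using $\val{x}{0}=0$, this gives
\[
\nbAx{T}{V} = \pinf + \nbx{T}{\{1,\ldots,\pinf\}} - \val{x}{\pinf} + \val{x}{c}.
\]
Subtracting $\nbx{T}{V} = \nbx{T}{\{1,\ldots,\pinf\}} + \nbx{T}{\{\psup,\ldots,n-1\}}$ leaves
\[
\nbAx{T}{V} - \psup = \pinf - \nbx{T}{S},
\]
where $S=\{\pinf,\ldots,n-1\}\setminus\{c\}$ and $|S|=\pinf$. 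Therefore $\nbAx{T}{V}=\psup$ exactly when $\nbx{T}{S}=|S|=\pinf$. Otherwise $\nbx{T}{S}\le\pinf-1$, which gives $\nbAx{T}{V}\ge\psup+1$. This is precisely the dichotomy in the statement.
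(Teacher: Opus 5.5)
Your proposal is correct and takes essentially the same route as the paper. The condition $\val{x}{0}=0$ pushes every node of $R\cup\{n-1\}$, including $c$, to $1$. The shift on $U\cup\{0\}$ then gives $\nbAx{T}{V}=\psup+\nbx{F}{S}$, which the paper writes as $\psup+\nbx{F}{S'}+(1-\val{x}{\pinf})$ with $S'=(R\cup\{n-1\})\setminus\{c\}$. Your reading of the in-neighbourhoods, taken from the construction algorithm rather than the garbled edge list, is the right one and is the one the paper's proof uses.
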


\begin{proof}
	Since $\val{x}{0} = 0$, we have $\nbx{F}{N_G(r)} \le \pinf - 1$ for all $r \in R u\cup \{n-1\}$. 
	Thus, $\nbAx{T}{R \cup \{n-1\}} = \pinf \ge \nbx{T}{R \cup \{n-1\}}$. 
	Therefore, $\nbAx{T}{\{0\} \cup R \cup \{n-1\}} = \pinf + \val{x}{c}$.
	Indeed, $\val{x}{c} = 1$ if and only if $\val{A(x)}{0} = 1$.
	
	Moreover, $\nbAx{T}{U} = \nbx{T}{U} - \val{x}{\pinf}$.
	However, since $\nbx{T}{U} = \nbx{F}{S'} + (1 - \val{x}{c}) + 1$, with $S' = (R \cup \{n-1\}) \backslash \{c\}$, it turns out that 
	$\nbAx{T}{V} = \pinf + \val{x}{c} + \nbx{F}{S'} + (1 - \val{x}{c}) + 1 - \val{x}{\pinf} = \psup + \nbx{F}{S'} + (1 - \val{x}{\pinf})$. 
	Hence, $\nbAx{T}{V} \ge \psup$ and $\nbAx{T}{V} = \psup$ if and only if $\val{x}{\pinf} = 1$ and $\nbx{F}{S'} = 0$.
	The lemma follows.
\end{proof}

Now, we assume that $\val{x}{0} = 1$, and for deriving the proof we proceed in two steps. First, we focus our attention on the nodes in $R$ then we conclude.

\begin{lemma}\label{lemma:TCL_1_R}
	Let us assume that $\val{x}{0} = 1$. Then, $$\nbAx{T}{R}~=~\nbx{T}{R}~+~\val{x}{n-1}~+~\val{x}{\pinf}~-~1.$$
\end{lemma}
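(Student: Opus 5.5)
The plan is a direct counting argument. Throughout, we keep the standing assumption of this part of the section, namely $\nbx{T}{V} = \psup$. Together with $\val{x}{0} = 1$, this gives $\nbx{T}{V \backslash \{0\}} = \psup - 1 = \pinf$. Everything then reduces to reading off the in-neighbourhood of each node of $R$.

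First, I determine $N_G(r)$ for $r \in R$ from the definition (equivalently, from steps 3 and 4 of the construction algorithm). Node $r$ receives arcs from every node $1, \ldots, n-1$ except $r - \pinf$. Node $0$ never points into $R$, and the removal of $(c, n-1)$ only concerns node $n-1$. Hence $N_G(r) = V \backslash \{0, r - \pinf\}$, which has odd size $n-2$, so no tie can occur. The map $r \mapsto r - \pinf$ is a bijection from $R = \{\psup, \ldots, n-2\}$ onto $W := \{1, \ldots, \pinf - 1\}$.

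Next, I compute the local update. We have $\nbx{T}{r} = \pinf - \val{x}{r-\pinf}$. The node takes value $1$ exactly when $\nbx{T}{r} > (n-2)/2$, that is, when $\nbx{T}{r} \ge \pinf$. So $\val{A(x)}{r} = 1$ if and only if $\val{x}{r - \pinf} = 0$. Summing over $R$ and using the bijection gives $\nbAx{T}{R} = \nbx{F}{W} = (\pinf - 1) - \nbx{T}{W}$.

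Finally, I express $\nbx{T}{W}$ through the partition $V \backslash \{0\} = W \cup \{\pinf\} \cup R \cup \{n-1\}$. This yields $\nbx{T}{W} = \pinf - \val{x}{\pinf} - \nbx{T}{R} - \val{x}{n-1}$. Substituting into the previous expression gives exactly $\nbAx{T}{R} = \nbx{T}{R} + \val{x}{n-1} + \val{x}{\pinf} - 1$.

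The only delicate step is the first: correctly identifying $N_G(r)$ for $r \in R$. One must check that $0$, the cross point $c$ and $n-1$ do not alter this neighbourhood, since $(c,0)$ and $(c, n-1)$ only affect nodes $0$ and $n-1$. One must also check that the excluded predecessor $r - \pinf$ always lies in $W$ and never equals $\pinf$. The remaining steps are routine arithmetic.
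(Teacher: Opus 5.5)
Your proof is correct, and it reaches the formula by a more direct route than the paper does.

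The paper fixes a bijection $f\colon R\cup\{n-1\}\to U$ that matches each node in state $1$ with a node in state $0$; this is possible since $V\setminus\{0\}$ contains exactly $\lfloor n/2\rfloor$ ones. It then sorts the nodes $i\in R\cup\{n-1\}$ into four classes according to the pair $(x_i,x_{i-\lfloor n/2\rfloor})$. Next it re-chooses $f$ ``without loss of generality'' so that $f$ follows the shift $i\mapsto i-\lfloor n/2\rfloor$ wherever possible. Finally it balances the nodes of $R$ switching $0\to1$ against those switching $1\to0$, with a separate case analysis on whether $x_{n-1}=x_{\lfloor n/2\rfloor}$, which produces the three outcomes $0$, $+1$, $-1$.

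You isolate the one mechanism behind that bookkeeping. Each $r\in R$ sees all of $V\setminus\{0\}$ except $r-\lfloor n/2\rfloor$, and $V\setminus\{0\}$ carries exactly the threshold number of ones, so $A(x)_r=1-x_{r-\lfloor n/2\rfloor}$. A single linear identity over the partition $V\setminus\{0\}=W\cup\{\lfloor n/2\rfloor\}\cup R\cup\{n-1\}$ then gives the formula uniformly, with no re-choice of a bijection and no case split. It still yields the node-level information the paper's classes encode: $r$ changes state exactly when $x_r=x_{r-\lfloor n/2\rfloor}$. The paper's case presentation displays the trichotomy used later in the convergence theorem, but your closed formula gives that trichotomy immediately too.

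Two minor remarks. You are right to state the standing hypothesis $T(V,x)=\lceil n/2\rceil$ explicitly, since the lemma fails without it (e.g.\ for the all-ones configuration). The neighbourhood $N_G(r)=V\setminus\{0,r-\lfloor n/2\rfloor\}$ you use is the one given by the construction algorithm, the figure caption and the paper's own proof. The bullet-list definition, read literally, gives a different neighbourhood, so cite the algorithm rather than asserting that the two agree.
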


\begin{proof}
	Let $R' = R \cup \{n-1\}$. 
	Since $|R'| = |U| = \pinf$ and $\nbx{T}{V \backslash \{0\}} = \pinf$, we deduce that there exists a bijection $f: R' \to U$, such that $\val{x}{i} = 1$ if and only if $\val{x}{f(i)} = 0$.
	 
	Let $I$ be the subset of $R'$ such that $i \in I$, if and only if $\val{x}{i - \pinf} = 0$ and $\val{x}{i} = 1$. 
	Without loss of generality, we can suppose that $f(i) = i - \pinf$ for all $i \in I$.
	Indeed, if there exists $i \in I$ such that $f(i)$ is not $i - \pinf$, we can just consider $g: R' \to U$ such that:
	\[ j \mapsto \begin{cases}
		i - \pinf &\textnormal{if } j = i,\\
		f(i), &\textnormal{if } j = f^{-1}(i - \pinf), \\
		f(j), &\textnormal{otherwise.} \\
	\end{cases}\]
	
	Let $Z$ be the subset of $R'$ such that $i \in Z$ if and only if $\val{x}{i} = 1$ and $\val{x}{i - \pinf} = 0$, and $Z'$ be the set $\{z' \mid \exists z \in Z, z' = \psup + z - 1\}$. 
	As in the previous point, we can suppose, without loss of generality that $f(i) = i - \pinf$. 
	
	We consider the subset $I'$ of elements of $R' \backslash I$ such that $i \in I'$ if and only if $\val{x}{i} = 1$ and $\val{x}{i - \pinf} = 1$, and also the subset $Z'$ of elements of $R' \backslash Z$ such that $i \in Z'$ if and only if $\val{x}{i} = 0$ and $\val{x}{i - \pinf} = 0$.
	It follows that $i \in Z'$ if and only if $j \in I'$ with $j = f^{-1}(i - \pinf)$; indeed, if $i \in Z'$ then $\val{x}{j} = 1$ (because $\val{x}{i - \pinf} = 0$). 
	And since $\val{x}{i} = 0$, we deduce that $i \neq j$. 
	Therefore, from the hypothesis on $f$, $j \notin I$, thus implying that $j \in I'$. 
	And if $j \in I'$, then $\val{x}{i - \pinf} = 0$ and $i \neq j$. 
	Thus, from the hypothesis on $f$, $i \notin I$ implies that $\val{x}{i} \neq 1$. Consequently, $\val{x}{i} = 0$ and $i \in Z'$.
	
	At this point, $\nbAx{T}{I \cap R} = |I \cap R| = \nbx{T}{I \cap R}$ and $\nbAx{F}{Z \cap R} = |Z \cap R| = \nbx{F}{Z \cap R}$. 
	Thus, if $\val{x}{n-1} \neq \val{x}{\pinf}$, then $n-1 \in I$ or $n-1 \in Z$. 
	This implies that $I', Z' \subseteq R$. 
	Consequently, $i - \pinf \in U \backslash \{\pinf\}$ for all $i \in I' \cup Z'$. 
	Therefore, for each $i \in I'$, the node $j = f^{-1}(i - \pinf)$ is in $Z'$ and $\nbx{F}{N_G(j)} \le \pinf - 1$; 
	also, for each $i \in Z'$, the node $j = f^{-1}(i - \pinf)$ is in $I'$ and $\nbx{T}{N_G(j)} \le \pinf - 1$.
	All this leads to $\nbAx{T}{Z'} = |I'| = |Z'| = \nbAx{F}{I'}$. 
	Finally, $\nbAx{T}{R} = \nbAx{T}{I \cap R} + \nbAx{T}{Z'} = \nbx{T}{I \cap R} + \nbx{T}{I'} = \nbx{T}{R}$.    
	
	Now, we assume $\val{x}{n-1} = \val{x}{\pinf}$. 
	Once again, without loss of generality, we can suppose that $f^{-1}(\pinf) - \pinf = f(n-1)$; 
	indeed, since $\val{x}{n-1} = \val{x}{\pinf}$, it turns out that $n-1 \in I'$ or $n-1 \in Z'$. 
	Therefore, $\val{x}{j} = (1- \val{x}{n-1})$, and  
	we deduce that $\val{x}{f(j)} = \val{x}{n-1}$, with $j = f^{-1}(\pinf)$. 
	Furthermore, $i \in I'$ or $i \in Z'$ and $\val{x}{i} = (1 - \val{x}{n-1})$ with $i = f(n-1) + \pinf$.
	Hence, $\val{x}{f(i)} = \val{x}{n-1} = \val{x}{f(j)}$, and we can just consider $g: R' \to U$ such that:
	\[ k \mapsto \begin{cases}
		\pinf &\textnormal{if } k = i,\\
		f(i) &\textnormal{if } k = j, \\
		f(k) &\textnormal{otherwise.} \\
	\end{cases}\]
	
	Analogously to the previous cases, $\nbAx{T}{Z' \backslash \{n-1, f^{-1}(\pinf)\}} = |Z'| - 1 = |I'| - 1 = \nbAx{F}{I' \backslash \{n-1, f^{-1}(\pinf)\}}$, and we conclude that $\nbAx{T}{R} = \nbx{T}{I} + |I'| - 1 + \val{A(x)}{f^{-1}(\pinf)}$.
	So, either $\val{x}{n-1} = 1$, which implies that $\val{A(x)}{f^{-1}(\pinf)} = 1$ and $|I'| = \nbx{T}{I' \cap R} + 1$ and, therefore, $\nbAx{T}{R} = \nbx{T}{I} + \nbx{T}{I'\cap R} + 1 = \nbx{T}{R} + 1$, 
	or $\val{x}{n-1} = 0$, thus implying that $\val{A(x)}{f^{-1}(\pinf)} = 0$ and $|I'| = \nbx{T}{I' \cap R}$.
	Consequently, $\nbAx{T}{R} = \nbx{T}{I} + \nbx{T}{I'\cap R} - 1 = \nbx{T}{R} - 1$.
\end{proof}

\begin{lemma}\label{lemma:TCL_1}
	Assuming that $\val{x}{0} = 1$ and \ $\nbx{T}{V} = \psup$, it is true that $\nbAx{T}{V} = \nbx{T}{V} + \val{x}{c}$.
\end{lemma}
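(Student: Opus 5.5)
The plan is to decompose $V$ into the four blocks $\{0\}$, $U=\{1,\ldots,\pinf\}$, $R=\{\psup,\ldots,n-2\}$ and $\{n-1\}$. I would compute $\nbAx{T}{\cdot}$ on each block separately and add the four contributions. The hypotheses $\val{x}{0}=1$ and $\nbx{T}{V}=\psup$ should then turn the sum into $\nbx{T}{V}+\val{x}{c}$.

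\textbf{Block contributions.} The in-neighbourhoods give the following.
\begin{itemize}
\item Node $0$ has the single in-neighbour $c$, so $\val{A(x)}{0}=\val{x}{c}$. This is the term I expect to produce the summand $\val{x}{c}$.
\item Each $i\in U$ has the single in-neighbour $i-1$, so $U$ receives a shift of $\{0,\ldots,\pinf-1\}$. Using $\val{x}{0}=1$, this gives $\nbAx{T}{U}=1+\nbx{T}{U}-\val{x}{\pinf}$.
\item For $R$, I invoke Lemma~\ref{lemma:TCL_1_R} directly: $\nbAx{T}{R}=\nbx{T}{R}+\val{x}{n-1}+\val{x}{\pinf}-1$.
\item Node $n-1$ has in-neighbourhood $V\setminus\{0,c\}$, of odd size $n-2$, so its majority threshold is $\pinf-\tfrac12$. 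Its number of $1$-neighbours is $\psup-1-\val{x}{c}$, which gives $\val{A(x)}{n-1}=1-\val{x}{c}$.
\end{itemize}

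\textbf{Summation.} Adding the four pieces, the $\val{x}{\pinf}$ terms cancel between $U$ and $R$. The leftover constants and $\val{x}{n-1}$ recombine with $\val{x}{0}=1$ into $\nbx{T}{V}$. What remains is the bookkeeping of the $\val{x}{c}$-dependent terms coming from node $0$ and node $n-1$, and that bookkeeping is where the $+\val{x}{c}$ of the statement has to come from.

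\textbf{Main obstacle.} The hard step is exactly this accounting, because the two $\val{x}{c}$ contributions look as though they cancel. First, I would recheck which set the identity of Lemma~\ref{lemma:TCL_1_R} really refers to. Its proof works with $R'=R\cup\{n-1\}$ and the bijection $f:R'\to U$, so its conclusion may already include node $n-1$. In that case node $n-1$ must not be counted a second time, and the surviving term from node $0$ yields $+\val{x}{c}$ after simplification.

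Second, I would verify that reading on the case $\val{x}{c}=0$. There the goal is $\nbAx{T}{V}=\psup$, matching the pattern of Lemma~\ref{lemma:TCL_0}.

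Third, I would verify the case $\val{x}{c}=1$, where the goal is $\nbAx{T}{V}=\psup+1$. Here I would track by hand how $c$, which is both the cross point and a member of $R$, is updated. Its in-neighbourhood excludes $c-\pinf$, so I would check whether that exclusion contributes the missing unit.

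If the direct sum still does not produce the extra unit, I would compute the state of $c$ and of $f^{-1}(\pinf)$ explicitly in the proof of Lemma~\ref{lemma:TCL_1_R}, under the assumption $\val{x}{c}=1$, to locate it.
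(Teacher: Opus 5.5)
The gap is the step you label the main obstacle: it cannot be closed, because there is no extra unit to find. Your four block contributions are correct, and they are exactly the ones the paper uses: $A(x)_0=x_c$, $T(U,A(x))=1+T(U,x)-x_{\lfloor n/2\rfloor}$, Lemma~\ref{lemma:TCL_1_R} for $R$, and $A(x)_{n-1}=1-x_c$. The two $x_c$ terms really cancel, and the sum is $T(V,A(x))=T(V,x)$.

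This is literally the last line of the paper's own computation. It is also how the lemma is used in the convergence theorem, to get $T(V,A^i(x))=\lceil n/2\rceil$ while $A^i(x)_c=1$. So the ``$+x_c$'' in the statement is a misprint, and as written the statement is false whenever $x_c=1$. Concretely, take $n=7$ and $c=4$, so that $N_G(4)=\{2,3,4,5,6\}$, $N_G(5)=\{1,3,4,5,6\}$ and $N_G(6)=\{1,2,3,5,6\}$. Take $(x_0,\ldots,x_6)=(1,1,1,0,1,0,0)$. Then $A(x)=(1,1,1,1,0,0,0)$ has $4$ ones, not $5$. In general, under the hypotheses each $j\in R$ satisfies $A(x)_j=1-x_{j-\lfloor n/2\rfloor}$, and the direct count is exactly $\lceil n/2\rceil$ for every admissible $x$.

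For the same reason your fallback strategies cannot succeed. Lemma~\ref{lemma:TCL_1_R} is about $T(R,A(x))$ only. The set $R'$ and the bijection $f$ are auxiliary devices in its proof, and its formula agrees with the direct count on $R$. If you nevertheless absorb node $n-1$ into it and stop counting $1-x_c$ separately, you get $T(V,x)-1+x_c$. That already fails your own test at $x_c=0$.

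The cross point $c$ is an ordinary element of $R$. Its new state $1-x_{c-\lfloor n/2\rfloor}$ is already included in the $R$ term, so it cannot supply a missing unit either. The right move is to trust your computation, prove $T(V,A(x))=T(V,x)$, and note that the stated $+x_c$ version is wrong.
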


\begin{proof}
	Since $\val{x}{0} = 1$, we deduce that $\val{A(x)}{n-1} = (1 - \val{x}{c})$. 
	In addition, since $\nbAx{T}{U} = \nbx{T}{U} - \val{x}{\pinf} + \val{x}{0}$, we have that $\nbAx{T}{U} =  \nbx{T}{U} - \val{x}{\pinf} + 1 = \nbx{T}{\{0\} \cup U} - \val{x}{\pinf}$, and also that $\val{A(x)}{0} = \val{x}{c}$.
	Additionally, by Lemma~\ref{lemma:TCL_1_R}, we have $\nbAx{T}{R} = \nbx{T}{R} + \val{x}{n-1} + \val{x}{\pinf} - 1 = \nbx{T}{R \cup \{n-1\}} +  \val{x}{\pinf} - 1$, which then allows us to conclude that: 
	\begin{align*}
		\nbAx{T}{V} &= \nbAx{T}{R} + \val{A(x)}{n-1} + \val{A(x)}{0} + \nbAx{T}{U}\\ 
		&= \nbx{T}{R \cup \{n-1\}} +  \val{x}{\pinf} - 1 + 1 - \val{x}{c} + \val{x}{c} + \nbx{T}{\{0\} \cup U} - \val{x}{\pinf}\\
		&= \nbx{T}{R \cup \{n-1\}} + \nbx{T}{\{0\} \cup U}\\
		&= \nbx{T}{V}.
	\end{align*}
\end{proof}

All that remains is to prove that $A$ converges, which is obtained in the following lemma. 

\begin{theorem}
    Each MBAN with two intersecting cycles is able to solve DCT,
    and it converges in $n+5$ iterations.
\end{theorem}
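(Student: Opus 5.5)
We argue with $\majority{x} = 1$; the case $\majority{x} = 0$ follows by the symmetric argument (exchange the roles of $0$ and $1$). The structure mirrors the two previous theorems. First we dispose of the easy case. Then we show that the configuration with exactly $\psup$ ones cannot persist for more than a few steps, so the dynamics is pushed into the easy case.

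\textbf{Easy case.} If $\nbx{T}{V} \ge \psup + 1$, Lemma~\ref{lemma:cross_half_1_conv} applies directly. The count of $1$s never decreases, and all nodes are $1$ after at most $\pinf + 1$ iterations. So from now on we assume $\nbx{T}{V} = \psup$, and we use Lemmas~\ref{lemma:TCL_0} and~\ref{lemma:TCL_1} to follow the orbit while it stays at $\psup$ ones. Both lemmas show that the count never drops below $\psup$. Hence as soon as it reaches $\psup+1$ we are back in the easy case, with at most $\pinf+1 \le n$ further iterations.

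\textbf{Case $\val{x}{0} = 0$.} By Lemma~\ref{lemma:TCL_0}, either we gain a $1$ at once, or $\nbAx{T}{V} = \psup$. In the latter case we also get $\val{A(x)}{c} = 1$, and $\nbAx{T}{R \cup \{n-1\}} = \pinf$, since every node of $R \cup \{n-1\}$ sees at most $\pinf - 1$ zeros. We then look at $\val{A(x)}{0}$. If it is $0$, we apply Lemma~\ref{lemma:TCL_0} again to $A(x)$. Otherwise $\val{A(x)}{0} = 1$ together with $\val{A(x)}{c} = 1$, and Lemma~\ref{lemma:TCL_1} gives $\nbAit{T}{V}{2} = \psup + 1$.

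To exclude an infinite loop in the $\val{\cdot}{0}=0$ branch, we use $\val{A(x)}{c}=1$. Node $0$ copies $c$, so $\val{A^2(x)}{0} = \val{A(x)}{c} = 1$. Therefore, at the latest at step $2$, we are in the case $\val{\cdot}{0} = 1$ with $\val{\cdot}{c} = 1$ at step $1$ feeding it.

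\textbf{Case $\val{x}{0} = 1$.} By Lemma~\ref{lemma:TCL_1}, $\nbAx{T}{V} = \psup + \val{x}{c}$. If $\val{x}{c}=1$ we are done. If $\val{x}{c}=0$, then $\val{A(x)}{0}=0$ and the count stays at $\psup$, so Lemma~\ref{lemma:TCL_0} applies at the next step and yields $\val{A^2(x)}{c}=1$. One more step sets node $0$ to $1$ while $c$ is still $1$, or else we have already gained a $1$. Following these transitions shows that within at most about $4$ or $5$ steps we reach a configuration $y$ with $\val{y}{0}=1=\val{y}{c}$, or with $\psup+1$ ones. Lemma~\ref{lemma:TCL_1} then gives the extra $1$.

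\textbf{Main obstacle and bound.} The delicate step is checking that $c$ keeps value $1$ at the moment $0$ becomes $1$. Node $c\in R$ updates by majority over almost everything, and Lemma~\ref{lemma:TCL_0} only guarantees $\val{A(y)}{c}=1$ when $\val{y}{0}=0$. So I would track the short chain $\val{y}{0}=0 \Rightarrow \val{A(y)}{c}=1,\ \val{A^2(y)}{0}=1$ and check, via Lemma~\ref{lemma:TCL_1}, what $A^2(y)$ looks like. If $\val{A(y)}{0}=0$, apply Lemma~\ref{lemma:TCL_0} again. If $\val{A(y)}{0}=1$, we gain a $1$ since $\val{A(y)}{c}=1$.

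Combining this bounded transient of a few steps with the $\pinf+1$ steps of Lemma~\ref{lemma:cross_half_1_conv} gives convergence to $1^n$ within $n+5$ iterations. Since the count of $1$s is monotone, the limit is the correct fixed point.
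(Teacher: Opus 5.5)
Your argument rests on one step, used in both branches: a configuration $y$ with $\psup$ ones and $\val{y}{0}=\val{y}{c}=1$ is supposed to gain a one at the next step ``by Lemma~\ref{lemma:TCL_1}''. That step is false. The statement of Lemma~\ref{lemma:TCL_1} contains a spurious $+\val{x}{c}$. Its own proof ends with $\nbAx{T}{V}=\nbx{T}{V}$, and the paper's proof of the theorem uses it in that form: the count stays at $\psup$ while nodes $0$ and $c$ are both $1$.

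A direct check confirms this. If $\val{x}{0}=1$ and $\nbx{T}{V}=\psup$, nodes $1,\dots,n-1$ carry exactly $\pinf$ ones and $\pinf$ zeros. So each $r\in R$ outputs $1-\val{x}{r-\pinf}$, and node $n-1$ outputs $1-\val{x}{c}$. Adding the shifted values on $\{0\}\cup U$ gives exactly $\psup$ ones, whatever $\val{x}{c}$ is.

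For a concrete case, take $n=7$, $c=4$, listing $x_0,\dots,x_6$, and start from $x=(0,0,0,1,1,1,1)$. Then $A(x)=(1,0,0,0,1,1,1)$, where nodes $0$ and $c$ are both $1$, and $A^2(x)=(1,1,0,0,1,1,0)$ still has $4$ ones. This contradicts your claim $\nbAit{T}{V}{2}=\psup+1$ in the branch $\val{x}{0}=0$. The orbit continues with $A^3(x)=(1,1,1,0,0,1,0)$ and $A^4(x)=(0,1,1,1,0,0,1)$, and the fifth one only appears in $A^5(x)=(0,0,1,1,1,1,1)$. So the obstacle you isolate (whether $c$ is still $1$ when $0$ becomes $1$) is not the relevant question, and the ``bounded transient of a few steps'' is not established.

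The missing observation is that the phase where node $0$ is $1$ freezes the count. The only way out of $\psup$ ones is therefore Lemma~\ref{lemma:TCL_0}: a time when node $0$ is $0$ and some node of $S=\{\pinf,\dots,n-1\}\setminus\{c\}$ is $0$. You must show the orbit reaches such a time, and this is what the paper's proof does, in three steps:
\begin{enumerate}
\item During the phase where node $0$ is $1$, the ones emitted by node $0$ travel along the path $1,\dots,\pinf$, while the nodes of $R$ become complements of path nodes.
\item In particular, node $c$ takes the complement of the previous value of node $c-\pinf$. So $c$ is forced to $0$ within about $c-\pinf+1$ steps, and node $0$ drops to $0$ one step later.
\item At that moment $S$ contains a $0$. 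For instance, nodes $\pinf$ and $n-2$ then hold complementary values, which settles the case $c\neq n-2$. The case $c=n-2$ needs one more step.
\end{enumerate}
Lemma~\ref{lemma:TCL_0} then gives $\psup+1$ ones, and Lemma~\ref{lemma:cross_half_1_conv} finishes the proof. In the example above, the exit happens at $A^4(x)$, where node $5\in S$ is $0$. The transient before the count increases is thus of order $c-\pinf$, not a constant, and this is what the $n+5$ bound accounts for.
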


\begin{proof}
	Without loss of generality, let us assume that $\nbx{T}{V} = \psup$. Indeed, in the other case $\nbAit{T}{V}{i} = n$ for all $i \ge \psup$ by Lemma~\ref{lemma:cross_half_1_conv}.
	This leads to two possibilities, according to the value of $\val{x}{0}$, as we proceed below. 	

	--- $\val{x}{0} = 1$: 
	Let $p$ the be greater integer in $\{0, c - \pinf\}$  such that $\val{x}{p} = 1$ and $d = c - \pinf - p$; therefore, $\val{A^{i}(x)}{c} = 1 = \val{A^{i(x)}}{0}$ for all integer $0 \le i \le d$, and, from Lemma~\ref{lemma:TCL_1}, $\nbAit{T}{V}{i} = \psup$. 
	Thus, $\val{A^{d+1}(x)}{c} = 0$ and $\val{A^{d+1}(x)}{0} = \val{A^{d+1}(x)}{1} = \val{A^{d+1}(x)}{c+1 - \pinf} = 1$. 
	From Lemma~\ref{lemma:TCL_1}, we also have that $\nbAit{T}{V}{d+1} = \psup$, and remark that $c+1 - \pinf$ can be equal to $2$ or $\pinf$, because $n \ge 7$ and $\pinf = 3$.
	Therefore, at least one $j \in R \backslash \{c\}$ is such that $\val{A^{d+2}(x)}{j} = 0$, namely, $j = 1$ or $j = c+1$. 
	In addition, since  $\val{A^{d+1}(x)}{c} = 0$, we have $\val{A^{d+2}(x)}{0} = 0$. 
	By Lemma~\ref{lemma:TCL_0}, we deduce that $\nbAit{T}{V}{d+3} \ge \psup + 1$. 
	Thus, by Lemma~\ref{lemma:cross_half_1_conv}, $\nbAit{T}{V}{j} = n$ for all $j \ge d + 3 + \psup$. 
	And since $c - \pinf < \pinf$, $d$ we can be bounded by $\pinf$, thus entailing that $j \ge \pinf + \psup + 3 = n + 3$.

	--- $\val{x}{0} = 0$:
    Let $S = (\{\pinf\} \cup R \cup \{n-1\} )\backslash \{c\}$. 
    Remark that if $\nbx{T}{S} < \pinf$, then $A$ converges to 1s in $\psup + 1$ iterations. 
    Indeed, by Lemma~\ref{lemma:TCL_0}, $\nbAx{T}{V} \ge \psup + 1$ and therefore, by Lemma~\ref{lemma:cross_half_1_conv} $\nbAit{T}{V}{j} = n$ for all $j \ge \psup + 1$.
    Consequently, we assume that $\nbx{T}{S} = \pinf$. 
    Thus, from Lemma~\ref{lemma:TCL_0}, we deduce that $\nbAx{T}{V} \ge \psup$ and $\val{A(x)}{c} = 1$.
	Therefore, $\val{A^{2}(x)}{0} = 1$ and $\nbAit{T}{V}{2} \ge \psup$ by Lemma~\ref{lemma:TCL_0} if $\val{A(x)}{0} = 0$ or by Lemma~\ref{lemma:TCL_1} if $\val{A(x)}{0} = 1$.
	We deduce that, by the first case, $\nbAit{T}{V}{j} = n$ for all $j \ge n + 3 + 2 = n + 5$. 
\end{proof}

	\section{Concluding remarks}

All the effort that has been put in trying to solve the density classification problem throughout the years has yielded invaluable insights into the mechanisms by which global properties can emerge from purely local rules. The DCT, therefore, stands as a testament to the power and limitations of emergent computation, offering a rich landscape for future exploration in the ever-evolving field of complex systems.

Although it is known for various years that no solution for the problem can exist for automata networks employing a circular graph of connection pattern -- i.e., the one associated to standard cellular automata -- here we showed that, beyond some trivial possibilities derived from arbitrary connection patterns, non-trivial cases do exist, where only partial visibility of the nodes by the others is allowed, by presenting four families of them. Here we presented them and proved their functioning; in another text, many complexity issues associated to them will be given, since presenting them here would make this text too lengthy and hence cumbersome.

Beyond the families we identifed, others might exist. In fact, exhaustive computational experiments enumerating all possible networks, up to isomorphism, led to 10 solutions for networks with 3 nodes, and 7514 solutions for networks with 5 nodes. So, the number of possible solutions grows very fast, even though, as should be noticed, not every solution of a given size is necessarily generalisable to larger sizes, which significantly restricts the possible solutions.

The solution families we were able to identify with the local majority rule has a certain elegance, as this represents the global problem being solved by its local variant. But this lets us wondering about how rich would the space of possible solutions be if we would allow arbitrary rules to be employed, a testament of the flexibility automata networks give us beyond cellular automata.

\section*{Acknowledgements}

The authors thank Gilberto de Melo Junior for probing the algorithms given in the paper.
This work has been partially funded by the HORIZON-MSCA-2022-SE-01 project 101131549: ``Application-driven Challenges for Automata Networks and Complex Systems (ACANCOS)''. P.P.B.~also thanks the Brazilian agency CNPq (Conselho Nacional de Desenvolvimento Científico e Tecnológico) for the research grant PQ 303356/2022-7. K.P.~and M.R.~thank ANR-24-CE48-7504
 ALARICE project.

	\bibliographystyle{cas-model2-names}
	\bibliography{biblio}
	
\end{document}